\begin{document}
\newtheorem{thm}{Theorem}
\newtheorem{cor}[thm]{Corollary}
\newtheorem{conj}[thm]{Conjecture}
\newtheorem{lemma}[thm]{Lemma}
\newtheorem{prop}{Proposition}
\newtheorem{problem}[thm]{Problem}
\newtheorem{remark}[thm]{Remark}
\newtheorem{defn}[thm]{Definition}
\newtheorem{ex}[thm]{Example}

\newcommand{\mR}{{\mathbb R}}
\newcommand{\mD}{{\mathbb D}}
\newcommand{\E}{{\mathbb E}}  % this is for expectation -- we can change
\newcommand{\cN}{{\mathcal N}}
\newcommand{\cR}{{\mathcal R}}
\newcommand{\cS}{{\mathcal S}}
\newcommand{\cC}{{\mathcal C}}
\newcommand{\diag}{\operatorname{diag}}
\newcommand{\tr}{\operatorname{trace}}
\newcommand{\f}{{\mathfrak f}}
\newcommand{\g}{{\mathfrak g}}
\newcommand{\range}{\cR}
\newcommand{\rH}{{\rm H}}
  %{\operatorname{range}}
\newcommand{\trace}{\operatorname{trace}}
\newcommand{\argmin}{\operatorname{argmin}}

\newcommand{\ignore}[1]{}

\def\spacingset#1{\def\baselinestretch{#1}\small\normalsize}
\setlength{\parskip}{10pt}
\setlength{\parindent}{20pt}
\spacingset{1}

\newcommand{\mike}{\color{magenta}}
\definecolor{grey}{rgb}{0.6,0.6,0.6}
\definecolor{lightgray}{rgb}{0.97,.99,0.99}

\title{Optimal steering of a linear stochastic system\\ to a final probability distribution, part II
}

\author{Yongxin Chen, Tryphon Georgiou and Michele Pavon
\thanks{Y.\ Chen and T.T.\ Georgiou are with the Department of Electrical and Computer Engineering,
University of Minnesota, Minneapolis, Minnesota MN 55455, USA; {email: \{chen2468,tryphon\}@umn.edu}}
\thanks{M.\ Pavon is with the Dipartimento di Matematica,
Universit\`a di Padova, via Trieste 63, 35121 Padova, Italy; {email: pavon@math.unipd.it}}}
\markboth{October 13, 2014}{}

\maketitle
{\begin{abstract}
We consider the problem of minimum energy steering of a linear stochastic system to a final prescribed distribution over a finite horizon and the problem to maintain a stationary distribution over an infinite horizon. For both problems the control and noise channels are allowed to be distinct, thereby, placing the results of this paper outside of the scope of previous work both in probability and in control.
We present sufficient conditions for optimality in terms of a system of dynamically coupled Riccati equations in the finite horizon case and in terms of algebraic conditions for the stationary case.
We then address the question of {\em feasibility} for both problems. For the finite-horizon case, provided the system is controllable, we prove that without any restriction on the directionality of the stochastic disturbance it is always possible to steer the state to any arbitrary Gaussian distribution over any specified finite time-interval.
For the stationary infinite horizon case, it is not always possible to maintain the state at an arbitrary Gaussian distribution through constant state-feedback. It is shown that covariances of admissible stationary Gaussian distributions are characterized by a certain Lyapunov-like equation and, in fact, they coincide with the class of stationary state covariances that can be attained by a suitable stationary colored noise as input.
We finally address the question of how to compute suitable controls numerically.
We present an alternative to solving the system of coupled Riccati equations,
by expressing the optimal controls in the form of solutions to (convex) semi-definite programs for both cases.
We conclude with an example to steer the state covariance of the distribution of inertial particles to an admissible stationary Gaussian distribution over a finite interval, to be maintained at that stationary distribution thereafter by constant-gain state-feedback control.
\end{abstract}}

\noindent{\bf Keywords:}
Linear stochastic systems, stochastic optimal control, stationary distributions, Schr\"odinger bridges, covariance control.

\section{Introduction}
Consider a linear system
\begin{equation}\label{eq:linearsystem}
\dot x(t)=Ax(t)+Bu(t), \;t\in[0,\infty)
\end{equation}
with $A\in \mR^{n\times n}$, $B\in\mR^{n\times m}$, $x(t)\in\mR^n$ and $u(t)\in\mR^m$,
and the problem to steer \eqref{eq:linearsystem} from the origin to a given point $x(T)=\xi\in\mR^n$. This of course is possible for any arbitrary $\xi\in\mR^n$ iff the system is {\em controllable}, i.e.,
the rank of $[B,\,AB,\ldots,\,A^{n-1}B]$ is $n$, that is, when $(A,B)$ is a {\em controllable pair}. In this case it is well known that the steering can be effected in a variety of ways, including ``minimum-energy'' control, over any prespecified interval $[0,T]$. On the other hand, the problem to achieve and maintain a fixed value $\xi$ for the state vector in a stable manner is not always possible. For this to be the case for a given $\xi$, using feedback and feedforward control, the equation
\begin{equation}\label{eq:steadystate}
0=(A-BK)\xi + Bu
\end{equation}
must have a solution $(u,K)$ for a constant value for the input $u$ and a suitable value of $K$ so that $A-BK$ is  Hurwitz (i.e., the feedback system be asymptotically stable).
It is easy to see that this reduces simply to the requirement that $\xi$ satisfies the equation
\[
0=A\xi+Bv
\]
for some $v$; if there is such a $v$, we can always choose a suitable $K$ so that $A-BK$ is Hurwitz and then, from $v$ and $K$, we can compute the constant value $u$. Conversely, from $u$ and $K$ we can obtain $v=u-K\xi$.

In the present paper, we discuss an analogous and quite similar dichotomy between our ability to assign the state-covariance of a linear stochastically driven system by steering the system over an interval $[0,T]$, and our ability to assign the state-covariance of the ensuing stationary state process through constant state-feedback. It will be shown that the state-covariance can be assigned at the end of an interval through suitable feedback control if and only if the system is controllable. On the other hand, a positive semidefinite matrix is an admissible stationary state-covariance attained through constant feedback if and only if it satisfies a certain Lyapunov-like algebraic equation. Interestingly, the algebraic equation that specifies which matrices are admissible stationary state-covariances through constant feedback is the same equation that characterizes stationary state-covariances attained through colored stationary input noise in open loop.

Both of these problems, to steer and possibly maintain the state statistics of a stochastically driven system, are motivated by technological advances that are now available to manipulate {micro} thermodynamic systems and to measure physical properties with unprecedented accuracy.  These include thermally driven atomic force microscopy \cite{toyabe2010nonequilibrium,gannepalli2005thermally}, molecular motors/ratchets, manipulation of macromolecules, laser tweezers \cite{toyabe2010nonequilibrium,braiman2003control,hayes2001active}, and the emergence of very high resolution measuring apparatuses \cite{rowan2000gravitational}. The relevant applications can be exemplified by the need to limit state-uncertainty of linear oscillators that are coupled to a heat bath using feedback, and by the need to control the collective behavior of a swarm of inertial particles experiencing stochastic forcing. The first type of application is encountered, for instance, in the micromechanical systems, laser driven reactions and, in particular, the active cooling of oscillators in devices aimed at measuring, e.g., gravitational waves \cite{ricci2014low}. The second type of application can be captured by the need to focus particle beams \cite{petroni2000stochastic}.

Historically, the problem to steer the probability density of Brownian particles in their path across two points in time, has its origin in a study published in 1931/1932 by Erwin Schr\"odinger \cite{schrodinger1931umkehrung},  \cite[Section VII]{schrodinger1932theorie}. In this, Schr\"odinger asked for the most likely trajectory of particles that are observed, at the two end points of their path, to be distributed according to given empirical distributions. The answer he gave, which provides an updated probability law on path space, in fact relates to {a minimum energy stochastic control} problem \cite{dai1991stochastic}. The subject, which advanced with leaps and bounds over the past 80 years by contributions from Fortet, Beurling, Jamison, F\"ollmer, and many others, has come to be known as Schr\"odinger bridges. Yet, all prior work, was related to the case where the diffusive particles are modeled by non-degenerate diffusions where the noise affects directly all entries of (vectorial) stochastic process, and the link to minimum-energy optimal control was drawn primarily via the Girsanov transformation \cite{dai1991stochastic} for that case.
Recent attempts to address linear stochastic systems were also limited to non-degenerate diffusions {where the control and noise channel are identical} \cite{beghi1997continuous,vladimirov2012minimum}.

In a ``sister paper'' that preceeds the present one \cite{chen2014optimal}, we presented a theory of Schr\"odinger bridges for general linear stochastic systems. This includes possibly degenerate linear diffusions and the theory entails two coupled homogeneous differential Riccati equations, in the style of classical LQR theory, which however are nonlinearly coupled through boundary conditions. For this case, where the equations are {\em only} coupled through their boundary conditions, it is shown in \cite{chen2014optimal} that they can be solved in {\em closed form} for the minimum-energy control. Interestingly, the development falls outside standard LQR theory because, aside for the coupling, the boundary conditions of the Riccati equations are in general {\em sign indefinite}. The salient feature of classical Schr\"odinger bridges, and that of the theory in our ``sister manuscript'' \cite{chen2014optimal}, is that the control which provides the needed drift to reconcile the empirical marginals {\em enters along { the same ``directions'' that the noise affects}, that is, control and noise channels are {\em identical}.}
The present work departs from \cite{chen2014optimal} in that {\em control and noise channels may now differ}.
Hence, no assumption on the directionality of our control authority as compared to that of the random driving noise is being made.
{ Interestingly, while certain aspects parallel \cite{chen2014optimal} (e.g., variational analysis, cf.\ Section \ref{sec:variational2}), the techniques needed to dertermine our ability to steer the state statistics and determine the corresponding control input are quite different.}

{The structure of the paper is as follows: In Section \ref{sec:variational2} we formulate both the finite horizon problem and the infinite horizon stationary problem, and present sufficient conditions for optimality.
%These, for the case of finite horizon, are in the form of two nonlinearly coupled differential Riccati equations with split boundary conditions and represent a generalization of the  Schr\"odinger system \cite{schrodinger1931umkehrung} and a {\em nonlinear} extension of the equations in \cite{chen2014optimal}.
In Section~\ref{sec:feasibility1} we consider the  {\em feasibility} of steering the statistics over a finite interval by a suitable control action and
%Our key result (Theorem \ref{thm:thm1}) states that, provided the stochastic system is controllable, Gaussian state-statistics can be steered to an arbitrary Gaussian final distribution at the end of any finite interval by time-varying state-feedback control.
%This result is surprising since the control may be quite ``handicapped'' compared to the stochastic forcing; for instance, the control input may be integrated several more times as compared to the disturbance excitation which, thereby, may affect in a more direct manner certain elements of the state vector ---this precise point is highlighted in a numerical example that is given in Section \ref{sec:example}. The dichotomy between control of \eqref{eq:linearsystem} and the problem to steer state-statistics can be further highlighted by noting that even when \eqref{eq:linearsystem} is controllable, there may be states that cannot be in a path that remains within the positive quadrant; in contrast, Theorem \ref{thm:thm1} states that controllability of \eqref{eq:linearsystem} is sufficient for constructing a path between two different state-covariances that remains in the positive cone throughout.
in Section~\ref{sec:feasibility2} we consider the possibility to maintain stationary state-statistics by constant state-feedback.
%At the end of each of the Sections \ref{sec:feasibility1} and \ref{sec:feasibility2} where we treat the finite horizon and stationary cases, respectively, we draw a connection between the problem to achieve/maintain specified state-covariance via state-feedback and a modeling problem \cite{georgiou2002structure,georgiou2002spectral} seeking to explain observed state-statistics by postulating a suitable colored input to the system.
In Sections \ref{sec:num} and \ref{sec:mestationary}, we formulate the least-energy optimal control problem in each of the two cases, finite horizon and stationary statistics, as semidefinite programs.
%The approach is provided as an alternative to solving the nonlinearly coupled Schr\"odinger-Riccati system of equations of Section \ref{sec:variational2} which appear be numerically sensitive.
Finally, Section \ref{sec:example} highlights the theory with a numerical example to steer the statistics of inertial particles, in the phase-plane, in each of these two modalities, transient and stationary.
%The example  illustrates the case where one may want to achieve admissible stationary values for the state-covariance in finite-time using time-varying state-feedback and, subsequently, to maintain those stationary values by a linear constant-gain state-feedback control.
}

\section{Optimal steering}\label{sec:variational2}

In this section we formulate the control problem to optimally steer a stochastic linear system to a final target Gaussian distribution at the end of a finite interval. In parallel, we formulate the problem to maintain a stationary Gaussian state distribution by constant state feedback for time-invariant dynamics. We also present sufficient conditions of optimality which in the case of finite-horizon take the form of a Schr\"odinger-like system of equations.

{The ability to specify the mean value of the state-vector reduces to the problem discussed at the start of the introduction. More specifically, since $\E\{x(t)\}=:\bar x(t)$ satisfies \eqref{eq:linearsystem}, controllability of $(A,B)$ is necessary and sufficient to specify $\bar x(T)$ at the end of the interval and this is effected by a deterministic mean value for the input process. Likewise, the mean value for a stationary input must satisfy \eqref{eq:steadystate} to attain $\bar x(t)\equiv\xi$ for a stationary state process. Thus, throughout and without loss of generality we assume that all processes have zero-mean and we only focus on our ability to assign the state-covariance in those two instances.}

\subsection{Finite-horizon optimal steering}\label{sec:finitehorizon}
Consider  the controlled evolution
\begin{align}\label{controlled}
dx^u(t)&=A(t)x^u(t)dt+B(t)u(t)dt+B_1(t)dw(t),\\\nonumber &\quad x^u(0)=x_0\mbox{ a.s.}
\end{align}
where $x_0$ an $n$-dimensional Gaussian vector independent of the standard $p$-dimensional Wiener process $\{w(t)\mid 0\le t\le T\}$ and with density
\begin{equation}\label{initial}\rho_0(x)=(2\pi)^{-n/2}\det (\Sigma_0)^{-1/2}\exp\left(-\frac{1}{2}x'\Sigma_0^{-1}x\right).
\end{equation}
Here, $A$, $B$ and $B_1$ are continuous matrix functions of $t$ taking values in $\mR^{n\times n}$, $\mR^{n\times m}$ and $\mR^{n\times p}$, respectively, $\Sigma_0$ is a symmetric positive definite matrix, and $T<\infty$ represents the end point of a time interval of interest.
 Suppose we also have a ``target'' Gaussian end-point distribution
 \begin{equation}\label{final}\rho_T(x)=(2\pi)^{-n/2}\det (\Sigma_T)^{-1/2}\exp\left(-\frac{1}{2}x'\Sigma_T^{-1}x\right),
\end{equation}
where we also assume $\Sigma_T$ symmetric and positive definite.
The uncontrolled evolution $x^{u\equiv 0}=\{x(t) \mid 0\le t\le T\}$ may be thought to represent a ``prior," or reference evolution, for which, in general, $x(T)$ is not distributed according to $\rho_T$. Thus, we seek the least-effort strategy to steer (\ref{controlled}) to the desired final probability density. To this end, let  $\mathcal U$ represent the family of {\em adapted}, {\em finite-energy} control functions such that (\ref{controlled}) has a strong solution and $x^u(T)$ is distributed according to \eqref{final}. Thus, $u\in\mathcal U$ is such that  $u(t)$ only depends on $t$ and on $\{x^u(s)\mid  0\le s\le t\}$ for each $t\in [0,T]$, satisfies
\[
J(u):=\E\left\{\int_0^Tu(t)' u(t) \,dt\right\}<\infty,
\]
and forces $x^u(T)$ to be distributed according to \eqref{final}.
Therefore, $\mathcal U$ represents the class of {\em admissible} control inputs.  The existence of such control inputs will be established in the following section, i.e., that ${\mathcal U}$ is not empty. At present, assuming this to be the case, we formulate the following {\em  Bridge Problem:}

\begin{problem}\label{formalization}  Determine
$
u^*:= \argmin_{u\in \mathcal U} \,J(u)$.
\end{problem}

{We point out that when $BB'\neq B_1B'_1$, no interpretation of this problem as a classical Schr\"{o}dinger bridge \cite{W} via the Girsanov transformation is possible since, in this case, the reference and controlled measures on path spaces are {mutually} singular; this is due to the fact that the martingale part of the two evolutions are different.
In spite of this, precisely the same completion of the squares argument used in \cite[Section II]{chen2014optimal} yields the sufficient conditions in Proposition \ref{prop:riccati} and shows that a {\em control-theoretic view} of the Schr\"odinger bridge problem \cite{dai1991stochastic} carries throught in this more general setting.}

\begin{prop}\label{prop:riccati} Let $\{\Pi(t) \mid 0\le t\le T\}$ be a solution of the matrix Riccati equation
\begin{equation}\label{R1}
\dot{\Pi}(t)=-A(t)'\Pi(t)-\Pi(t)A(t)+\Pi(t)B(t)B(t)'\Pi(t).
\end{equation}
Define the feedback control law
\begin{equation}\label{optcontr}
u(x,t):=-B(t)'\Pi(t)x
\end{equation}
and let $x^{u}=x^*$ be the Gauss-Markov process
\begin{align}\label{optevolution}
dx^*(t)&=\left(A(t)-B(t)B(t)'\Pi(t)\right)x^*(t)dt+B_1(t)dw(t),\\
&\quad \mbox{with }x^*(0)=x_0\mbox{ a.s. }\nonumber
\end{align}
If $x^*(T)$ has probability density $\rho_T$, then $u(x^*(t),t)=u^*(t)$, i.e., it is the solution to Problem \ref{formalization}.
\end{prop}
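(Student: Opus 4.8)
The plan is to run the completion-of-squares argument of \cite[Section II]{chen2014optimal}, the only novelty being that $B(t)$ and $B_1(t)$ need no longer coincide. Fix an arbitrary admissible control $u\in\mathcal U$ with response $x^u$ solving \eqref{controlled}, and apply It\^o's rule to $t\mapsto x^u(t)'\Pi(t)x^u(t)$ (the solution $\Pi$ being symmetric, a property preserved by \eqref{R1}). Substituting \eqref{R1} makes the terms in $A(t)$ cancel, leaving
\begin{equation*}
d\!\left(x^u{}'\Pi x^u\right)=\left[x^u{}'\Pi BB'\Pi x^u+2x^u{}'\Pi Bu+\trace\!\left(B_1'\Pi B_1\right)\right]dt+2x^u{}'\Pi B_1\,dw .
\end{equation*}
Adding $u'u\,dt$ to both sides and completing the square in $u$ gives the identity
\begin{equation*}
u'u\,dt=(u+B'\Pi x^u)'(u+B'\Pi x^u)\,dt-d\!\left(x^u{}'\Pi x^u\right)+\trace\!\left(B_1'\Pi B_1\right)dt+2x^u{}'\Pi B_1\,dw .
\end{equation*}

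Next I would integrate over $[0,T]$ and take expectations. Granting that the It\^o integral is a genuine martingale (see below), its expectation vanishes, so
\begin{equation*}
J(u)=\E\!\left\{\int_0^T(u+B'\Pi x^u)'(u+B'\Pi x^u)\,dt\right\}+\trace\!\left(\Pi(0)\Sigma_0\right)-\E\!\left\{x^u(T)'\Pi(T)x^u(T)\right\}+\int_0^T\trace\!\left(B_1'\Pi B_1\right)dt .
\end{equation*}
The decisive observation is that, for \emph{every} $u\in\mathcal U$, one has $x^u(0)\sim\rho_0$ and $x^u(T)\sim\rho_T$, so $\E\{x_0'\Pi(0)x_0\}=\trace(\Pi(0)\Sigma_0)$ and $\E\{x^u(T)'\Pi(T)x^u(T)\}=\trace(\Pi(T)\Sigma_T)$ are the same for all admissible controls. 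Hence $J(u)=\E\{\int_0^T(u+B'\Pi x^u)'(u+B'\Pi x^u)\,dt\}+C$, where $C:=\trace(\Pi(0)\Sigma_0)-\trace(\Pi(T)\Sigma_T)+\int_0^T\trace(B_1'\Pi B_1)\,dt$ does not depend on $u$. Therefore $J(u)\ge C$ on $\mathcal U$, with equality if and only if $u(t)=-B(t)'\Pi(t)x^u(t)$ for a.e.\ $t$, almost surely. The feedback \eqref{optcontr} driving the closed loop \eqref{optevolution} realizes this identity, and by hypothesis the resulting $x^*$ satisfies $x^*(T)\sim\rho_T$, so $u(x^*(\cdot),\cdot)\in\mathcal U$ and attains the lower bound $C$; thus it solves Problem~\ref{formalization}.

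The step I expect to be the main obstacle is the analytic justification that $\E\int_0^T\|x^u(t)'\Pi(t)B_1(t)\|^2\,dt<\infty$, which is what makes the stochastic integral a martingale with zero mean. Since $\Pi$ and $B_1$ are continuous on the compact interval $[0,T]$, this reduces to the bound $\sup_{t\in[0,T]}\E\|x^u(t)\|^2<\infty$; the latter follows from linearity of \eqref{controlled}, $\E\|x_0\|^2<\infty$, the finite-energy constraint $J(u)<\infty$ and a Gr\"onwall estimate, with a standard localization by $\tau_N=\inf\{t:\|x^u(t)\|\ge N\}$ and passage to the limit if one wants to be fully rigorous. It is worth stressing that a generic admissible $x^u$ need be neither Gaussian nor Markov, but only these second-moment bounds (not Gaussianity) enter the argument, whereas the optimizer $x^*$ in \eqref{optevolution} is automatically Gauss--Markov. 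Existence of $\Pi$ on all of $[0,T]$ is assumed in the statement, so nothing further is required there.
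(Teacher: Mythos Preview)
Your proof is correct and follows exactly the approach the paper intends: the paper does not spell out a proof here but simply points to ``precisely the same completion of the squares argument used in \cite[Section II]{chen2014optimal},'' which is what you have carried out in full, with the additional care of justifying the martingale property of the It\^o integral. Nothing substantive differs.
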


Now, in contrast to the standard LQR problem where the terminal cost provides a boundary value for the differential Riccati equation, here the boundary value $\Pi(0)$ is unspecified and needs to be selected so as to ensure that \eqref{optcontr} drives the state to the desired final distribution. In \cite{chen2014optimal} we when $B=B_1$, the mapping between $\Pi(T)$ and $\Sigma_T$ is onto with \eqref{R1} having no finite escape-time and, thereby, that steering is always possible. However, it was also noted in \cite{chen2014optimal} that $\Pi(T)$ may be indefinite, placing the analysis outside of standard LQR theory.
Thus, in the present more general case we also need to resort to an approach that departs from classical LQR in order to determine the appropriate solutions of (\ref{R1}). Below we recast Proposition \ref{prop:riccati} in the form of a Schr\"odinger system.

Let $\Sigma(t):=\E\left\{x^*(t)x^*(t)'\right\}$ be the state covariance of \eqref{optevolution} and assume that the conditions of the proposition hold. Then
\begin{align}\nonumber\dot{\Sigma}(t)&= \left(A(t)-B(t)B(t)'\Pi(t)\right)\Sigma(t)\\&\hspace*{-5pt}+\Sigma(t)\left(A(t)-B(t)B(t)'\Pi(t)\right)'+B_1(t)B_1(t)'\label{sigmaevol}
\end{align}
holds together with the two boundary conditions
\begin{equation}\label{BND}
\Sigma(0)=\Sigma_0, \quad \Sigma(T)=\Sigma_T.
\end{equation}
Further, since $\Sigma_0>0$, $\Sigma(t)$ is positive definite on $[0,T]$.
Now define
$${\rm H}(t):=\Sigma(t)^{-1}-\Pi(t).
$$
A direct calculation using (\ref{sigmaevol}) and (\ref{R1}) leads to \eqref{Schr2} below.
%\begin{eqnarray*}\label{R2}
%\dot{\rm H}(t)&=&-A(t)'{\rm H}(t)-{\rm H}(t)A(t)-{\rm H}(t)B(t)B(t)'{\rm H}(t)\\
%&&\hspace*{-1.1cm}+\left(\Pi(t)+{\rm H}(t)\right)\left(B(t)B(t)'-B_1(t)B_1(t)'\right)\left(\Pi(t)+{\rm H}(t)\right).\nonumber
%\end{eqnarray*}
We have therefore derived a {\em nonlinear} Schr\"{o}dinger system
\begin{subequations}\label{Schr1234}
\begin{eqnarray}\label{Schr1}
%\hspace*{-5pt}\dot{\Pi}(t)&=-A(t)'\Pi(t)-\Pi(t)A(t)+\Pi(t)B(t)B(t)'\Pi(t)\\
%\hspace*{-55pt}\dot{\rm H}(t)&=-A(t)'{\rm H}(t)-{\rm H}(t)A(t)-{\rm H}(t)B(t)B(t)'{\rm H}(t)\label{Schr2}\\
%&\hspace*{-.7cm}+\left(\Pi(t)+{\rm H}(t)\right)\left(B(t)B(t)'-B_1(t)B_1(t)'\right)\left(\Pi(t)+{\rm H}(t)\right).\nonumber\\
%\hspace*{-5pt}\Sigma_0^{-1}&=\Pi(0)+{\rm H}(0)\label{Schr3}\\
%\hspace*{-5pt}\Sigma_T^{-1}&=\Pi(T)+{\rm H}(T).\label{Schr4}
\hspace*{-5pt}\dot{\Pi} &=&-A '\Pi -\Pi A +\Pi B B '\Pi \\
\hspace*{-55pt}\dot{\rm H} &=&-A '{\rm H} -{\rm H} A -{\rm H} B B '{\rm H} \label{Schr2}\\
&&\hspace*{1cm}+\left(\Pi +{\rm H} \right)\left(B B '-B_1 B_1 '\right)\left(\Pi +{\rm H} \right).\nonumber\\
\hspace*{-5pt}\Sigma_0^{-1}&=&\Pi(0)+{\rm H}(0)\label{Schr3}\\
\hspace*{-5pt}\Sigma_T^{-1}&=&\Pi(T)+{\rm H}(T).\label{Schr4}
\end{eqnarray}
\end{subequations}
Indeed, in contrast to the case when $B=B_1$ (see \cite{chen2014optimal}),  the two Riccati equations in (\ref{Schr1234}) are coupled
not only through their boundary values (\ref{Schr3}-\ref{Schr4}) but also in a nonlinear manner through their dynamics in (\ref{Schr2}).
Clearly, the case $\Pi(t)\equiv 0$ corresponds to the situation where the uncontrolled evolution already satisfies the boundary marginals and, in that case, ${\rm H}(t)^{-1}$ is simply the prior state covariance.
%Suppose $\Pi(t)$ and ${\rm H}(t)$ are nonsingular solutions of (\ref{Schr1})-(\ref{Schr4}). Then
 %\[Q(t)=\Pi(t)^{-1} \mbox{ and } P(t)={\rm H}(t)^{-1}\] satisfy  the system
%\begin{subequations}
%\begin{eqnarray}\label{lin1}
%\dot{Q}(t)&=&A(t)Q(t)+Q(t)A(t)'-B(t)B(t)',\\
%\dot{P}(t)&=&A(t)P(t)+P(t)A(t)'+B(t)B(t)\\\nonumber &&-\left(P(t)Q(t)^{-1}+I\right)\left(B(t)B(t)'-B_1(t)B_1(t)'\right)\\\nonumber &&\times\left(Q(t)^{-1}P(t)+I)\right),\\\Sigma_0^{-1}&=&P(0)^{-1}+Q(0)^{-1}\\\Sigma_T^{-1}&=&P(T)^{-1}+Q(T)^{-1}.\label{bnd2}
%\end{eqnarray}\end{subequations}
We summarize our conclusion in the following proposition.
\begin{prop} {Assume that} $\{(\Pi(t),{\rm H}(t)) \mid 0\le t\le T\}$ satisfy (\ref{Schr1})-(\ref{Schr4}). Then the feedback control law (\ref{optcontr}) is the solution to Problem \ref{formalization} and the corresponding optimal evolution is given by (\ref{optevolution}).
\end{prop}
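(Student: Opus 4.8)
The statement is the converse of the derivation carried out just above it, so the plan is to run that derivation backwards and then invoke Proposition \ref{prop:riccati}. First, since the pair $(\Pi,{\rm H})$ is assumed to satisfy \eqref{Schr1234}, in particular $\Pi$ solves the Riccati equation \eqref{R1} on the whole interval $[0,T]$; hence the feedback law \eqref{optcontr} is well defined and the closed-loop process $x^*$ of \eqref{optevolution} is a bona fide Gauss--Markov process with $x^*(0)=x_0$. Its covariance $\Sigma(t):=\E\{x^*(t)x^*(t)'\}$ then solves the linear Lyapunov ODE \eqref{sigmaevol} with $\Sigma(0)=\Sigma_0$, and, exactly as noted after \eqref{BND}, $\Sigma(t)$ stays positive definite on $[0,T]$ because $\Sigma_0>0$ and the forcing $B_1B_1'$ is positive semidefinite (write $\Sigma$ via the variation-of-constants formula with the closed-loop transition matrix).

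Next, set $\widehat{\rm H}(t):=\Sigma(t)^{-1}-\Pi(t)$, which is well defined on $[0,T]$ by the previous paragraph. Differentiating, using $\tfrac{d}{dt}\Sigma^{-1}=-\Sigma^{-1}\dot\Sigma\,\Sigma^{-1}$ together with \eqref{sigmaevol} and \eqref{R1} --- this is the very ``direct calculation'' that produced \eqref{Schr2} --- one finds that $\widehat{\rm H}$ satisfies \eqref{Schr2}, and $\widehat{\rm H}(0)=\Sigma_0^{-1}-\Pi(0)={\rm H}(0)$ by \eqref{Schr3}. Now, with $\Pi$ regarded as a fixed continuous coefficient, \eqref{Schr2} is a matrix Riccati ODE in the single unknown ${\rm H}$: its right-hand side is polynomial in ${\rm H}$ with continuous dependence on $t$, hence locally Lipschitz, so solutions are uniquely determined by their value at $t=0$. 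Since both $\widehat{\rm H}$ and the given ${\rm H}$ solve \eqref{Schr2} on $[0,T]$ and agree at $t=0$, we conclude $\widehat{\rm H}\equiv {\rm H}$ on $[0,T]$.

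Consequently $\Sigma(t)^{-1}=\Pi(t)+{\rm H}(t)$ for all $t\in[0,T]$; evaluating at $t=T$ and using \eqref{Schr4} gives $\Sigma(T)^{-1}=\Sigma_T^{-1}$, i.e.\ $\E\{x^*(T)x^*(T)'\}=\Sigma_T$. Because $x^*$ is a zero-mean Gaussian process (the dynamics are linear and homogeneous and $x_0$ is zero-mean), $x^*(T)$ is then distributed according to $\rho_T$. All hypotheses of Proposition \ref{prop:riccati} are therefore met, and that proposition yields directly that \eqref{optcontr} evaluated along \eqref{optevolution} is the minimizer $u^*$ of Problem \ref{formalization} (this construction incidentally exhibits an admissible control, confirming $\mathcal U\neq\emptyset$).

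The only step requiring care is the uniqueness argument: although the coupled system \eqref{Schr1234} need not admit a unique solution as a whole, once $\Pi$ is frozen the equation \eqref{Schr2} for ${\rm H}$ is a standard locally Lipschitz matrix ODE, so the identification $\widehat{\rm H}={\rm H}$ --- and with it the boundary match $\Sigma(T)=\Sigma_T$ --- is forced on the interval $[0,T]$ where, by hypothesis, all quantities remain finite. Everything else is a verbatim re-tracing of the computation preceding the statement, together with the classical fact that the second moment of a linear SDE obeys the associated Lyapunov equation.
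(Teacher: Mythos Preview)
Your proof is correct and follows the natural route: you reverse the derivation preceding the proposition and then invoke Proposition~\ref{prop:riccati}. The paper itself offers no separate proof of this proposition---it is presented as a summary of the preceding computation (``We summarize our conclusion in the following proposition'')---so your argument is in fact more thorough than the paper's treatment. In particular, the paper only exhibits the forward implication (conditions of Proposition~\ref{prop:riccati} $\Rightarrow$ \eqref{Schr1234}) and leaves the converse implicit; your ODE-uniqueness step for ${\rm H}$ with $\Pi$ frozen is precisely what is needed to close that gap and force $\Sigma(T)=\Sigma_T$, and it is handled correctly.
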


{The existence and uniqueness of solutions for the Schr\"{o}dinger system  is quite challenging already in the classical case where the two dynamical equations are uncoupled and where major contributions are due to Fortet \cite{fortet}, Beurling \cite{Beurling}, Jamison \cite{Jamison}, F\"{o}llmer \cite{W}, see also  \cite{GP,chen2014optimal}. It is therefore hardly surprising that at present we don't know how to prove existence of solutions for (\ref{Schr1})-(\ref{Schr4})
\footnote{A numerical scheme based on successive approximations appears to be unstable and does not produce a fix point in general. In this case, such a scheme could consist of solving \eqref{Schr1} backwards in time starting from $\Pi(T)$, computing initial conditions for \eqref{Schr2} using \eqref{Schr3}, solving \eqref{Schr2}  forward in time to compute $H(T)$ so as to update $\Pi(T)$ using \eqref{Schr4} and repeating the cycle. A similar idea was carried out by Fortet \cite{fortet} in the classical setting, whereas a more powerful technique based on the Hilbert metric was explored recently in \cite{GP} for a Schr\"odinger system on finite spaces.}
A direct proof of existence of solutions for (\ref{Schr1})-(\ref{Schr4}) would in particular imply {\em feasibility} of Problem \ref{formalization}, i.e., that $\mathcal U$ is nonempty and that there exists a minimizer.
At present we do not have a proof that a minimizer exists. However, in Section \ref{sec:feasibility1} we establish that the set of admissible controls $\mathcal U$ is not empty and in Section \ref{sec:optimalsteering} we provide an approach that allows constructing suboptimal controls incurring cost that is arbitrarily close to $\inf_{u\in{\mathcal U}}J(u)$.}

{%\mike
\subsection{Infinite-horizon optimal steering}
Suppose now that $A$, $B$ and $B_1$ do not depend on time and that the pair $(A,B)$ is controllable.
We seek a constant state feedback law $u(t)=-Kx(t)$
to maintain a stationary state-covariance $\Sigma>0$ for \eqref{controlled}. In particular, we are interested
%\footnote{\mike Assuming there are multiple solutions.}
in one that minimizes the expected input power (energy rate)
\begin{eqnarray}\label{eq:power}
J_{\rm power}(u)&:=&\E\{u'u\}
\end{eqnarray}
and thus we are led to the following problem\footnote{ An equivalent problem is to minimize
$\lim_{T\to \infty}\frac{1}{T}\E\left\{\int_0^Tu(t)'u(t)dt\right\}$
for a given terminal state covariance as $T\to\infty$.}.
\begin{problem}\label{problem2} Determine $u^*$ that minimizes $J_{\rm power}(u)$ over all $u(t)=-Kx(t)$ such that 
\begin{equation}\label{feedbackdynamics}
dx(t)=(A-BK)x(t)dt+B_1dw(t)
\end{equation}
admits
\begin{equation}\label{invdensity} 
\rho(x)=(2\pi)^{-n/2}\det (\Sigma)^{-1/2}\exp\left(-\frac{1}{2}x'\Sigma^{-1}x\right)
\end{equation}
as invariant probability density.
\end{problem}

{Interestingly, the above problem may not have a solution in general since not all values for $\Sigma$ can be maintained by state feedback. {In fact, Theorem \ref{admissiblestate3} in Section \ref{sec:feasibility2}, provides conditions that ensure $\Sigma$ is admissible
as a stationary state covariance for a suitable input.
Moreover, as it will be apparent from what follows, even when the problem is feasible, i.e., there exist controls which maintain $\Sigma$, an optimal control may fail to exist}.

Let us start by observing that the problem admits the following finite-dimensional reformulation. Let $\mathcal K$ be the set of all $m\times n$ matrices $K$ such that the corresponding feedback matrix $A-BK$ is Hurwitz. Observe that
\[
\E\{u'u\}=\E\{x'K'Kx\}=\tr(K\Sigma K')
\]
Then Problem \ref{problem2} reduces to finding a $m\times n$ matrix $K^*\in\mathcal K$ which minimizes the criterion
\begin{equation}\label{criterion}
J(K)=\tr\left(K\Sigma K'\right)
\end{equation}
subject to the constraint
\begin{equation}
(A-BK)\Sigma+\Sigma(A'-K'B')+B_1B_1'=0.\label{constraint}
\end{equation}
Now, consider the Lagrangian function
 \begin{eqnarray}
 \mathcal{L}(K,\Pi)&=&\tr\left(K\Sigma K'\right)\\\nonumber&&\hspace*{-1cm}+\tr\left(\Pi((A-BK)\Sigma+\Sigma(A'-K'B')+B_1B_1')\right)
 \end{eqnarray}
 which is a simple quadratic form in the unknown $K$.
Observe that $\mathcal K$ is {\em open}, {hence a minimum point may fail to exist. Nevertheless,} at any point $K\in\mathcal K$ we can take a directional derivative in any direction $\delta K\in\mR^{m\times n}$ to obtain
$$
\delta \mathcal{L}(K,\Pi;\delta K)=\tr\left(\left(\Sigma K'+K\Sigma-\Sigma\Pi B-B'\Pi\Sigma\right)\delta K\right). 
$$
Setting $\delta \mathcal{L}(K,\Pi;\delta K)=0$ for all variations, {which is a sufficient condition for optimality, we get the form 
\begin{equation}\label{optgain}
K^*=B'\Pi.
\end{equation}
To compute $K^*$, we calculate the multiplier $\Pi$ as a maximum point of the dual functional
\begin{eqnarray}\label{dual}
G(\Pi)&=& \mathcal{L}(K^*,\Pi)\\\nonumber
&=&\tr\left(\left(A'\Pi+\Pi A-\Pi BB'\Pi\right)\Sigma+\Pi B_1B_1'\right).
\end{eqnarray}
The unconstrained maximization of the concave functional $G$ over symmetric $n\times n$ matrices produces matrices $\Pi^*$ which satisfy
(\ref{constraint}), namely
\begin{equation}\label{constraint'}
(A-BB'\Pi^*)\Sigma+\Sigma(A'-\Pi^*BB')+B_1B_1'=0.
\end{equation} 
There is no guarantee, however, that $K^*=B'\Pi^*$ is in $\mathcal K$, namely that $A-BB'\Pi^*$ is Hurwitz. Nevertheless, since (\ref{constraint'}) is satisfied, the spectrum of $A-BB'\Pi^*$ lies in the {\em closed} left half-plane. }
{ Thus, our variational analysis leads to the following result.}
\begin{prop}\label{prop:prop1}
Assume that there exists a symmetric matrix $\Pi$ such that $A-BB'\Pi$ is a Hurwitz matrix and
\begin{equation}\label{sigmastat}
(A-BB'\Pi)\Sigma+\Sigma(A-BB'\Pi)'+B_1B'_1=0
\end{equation}
holds.
Then 
\begin{equation}\label{statoptcontr}
u^*(t)=-B'\Pi x(t)
\end{equation}
is the solution to Problem \ref{problem2}. 
\end{prop}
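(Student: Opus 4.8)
The plan is to verify that the candidate control $u^*(t)=-B'\Pi x(t)$ is admissible and then establish optimality by a completion-of-squares (weak-duality) argument, exploiting the fact that the Lagrangian $\mathcal L(K,\Pi)$ equals $J(K)$ on the feasible set. First I would check admissibility: by hypothesis $A-BB'\Pi$ is Hurwitz, so $K^*=B'\Pi$ lies in $\mathcal K$, and \eqref{sigmastat} is exactly the Lyapunov equation asserting that the closed-loop diffusion \eqref{feedbackdynamics} with $K=K^*$ has $\Sigma$ as its (unique, since the matrix is Hurwitz) stationary covariance. Hence $\rho$ in \eqref{invdensity} is the invariant density and $u^*$ is a feasible point of Problem \ref{problem2}.

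For optimality, the key identity is that for \emph{every} $K\in\mathcal K$ the constraint \eqref{constraint} holds (it is the Lyapunov equation for the stationary covariance under feedback $K$), so that
\begin{equation*}
J(K)=\tr(K\Sigma K')=\mathcal L(K,\Pi)
\end{equation*}
for the specific symmetric multiplier $\Pi$ in the statement, since the added term vanishes. Now I would complete the square in $K$: writing $\mathcal L(K,\Pi)=\tr(K\Sigma K'-K\Sigma\Pi B-B'\Pi\Sigma K')+\tr((A'\Pi+\Pi A)\Sigma+\Pi B_1B_1')$ and using $\Sigma>0$, one obtains
\begin{equation*}
\mathcal L(K,\Pi)=\tr\!\left((K-B'\Pi)\Sigma(K-B'\Pi)'\right)+G(\Pi),
\end{equation*}
where $G(\Pi)$ is the dual functional in \eqref{dual}; here \eqref{sigmastat} is used to identify the $K$-independent remainder with $G(\Pi)$ (equivalently, \eqref{sigmastat} says $\Pi$ is a stationary point of $G$). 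Since $\Sigma>0$, the first term is nonnegative and vanishes precisely when $K=B'\Pi=K^*$. Therefore $J(K)=\mathcal L(K,\Pi)\ge G(\Pi)=J(K^*)$ for all $K\in\mathcal K$, and $K^*$ attains the bound, so $u^*$ solves Problem \ref{problem2}.

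The main obstacle, and the only genuinely delicate point, is the bookkeeping in the completion of squares: one must confirm that the $K$-independent terms produced by completing the square in $\tr(K\Sigma K'-K\Sigma\Pi B-B'\Pi\Sigma K')$, namely $-\tr(B'\Pi\Sigma\Pi B)$, combine with $\tr((A'\Pi+\Pi A)\Sigma+\Pi B_1B_1')$ to give exactly $G(\Pi)=\tr((A'\Pi+\Pi A-\Pi BB'\Pi)\Sigma+\Pi B_1B_1')$, which is immediate, and that this value equals $J(K^*)=\tr(B'\Pi\Sigma\Pi B)$; the latter equality is obtained by multiplying \eqref{sigmastat} by $\Pi$ and taking the trace. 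Everything else — admissibility, the identification $J(K)=\mathcal L(K,\Pi)$ on the feasible set, and nonnegativity of the quadratic term — is routine given $\Sigma>0$ and the Hurwitz hypothesis. Note that the argument only requires $\Pi$ as in the statement to exist; no claim is made that the unconstrained maximizer $\Pi^*$ of $G$ yields a Hurwitz $A-BB'\Pi^*$, which is exactly the gap flagged in the discussion preceding the proposition.
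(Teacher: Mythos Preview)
Your proposal is correct and follows essentially the same Lagrangian-duality route as the paper's variational analysis preceding the proposition: the paper sets the directional derivative of $\mathcal L(K,\Pi)$ to zero (noting this is sufficient for optimality of the quadratic form) to obtain $K^*=B'\Pi$, while you make the same step explicit via completion of the square $\mathcal L(K,\Pi)=\tr\!\big((K-B'\Pi)\Sigma(K-B'\Pi)'\big)+G(\Pi)$ and then invoke $J(K)=\mathcal L(K,\Pi)$ on the feasible set. Your version is slightly more self-contained in that it spells out why the first-order condition on the Lagrangian actually yields a global minimizer of the constrained problem (weak duality plus attainment), a point the paper leaves implicit; otherwise the arguments coincide.
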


{ We now draw a connection to some classical results due to  Jan Willems \cite{willems1971least}.  In our setting, minimizing \eqref{eq:power} is equivalent to minimizing
\begin{equation}\label{eq:alternate}
J_{\rm power}(u)+\E\{x'Qx\}
\end{equation}
for an arbitrary symmetric matrix $Q$ since the portion
\[
\E\{x'Qx\}=\tr\{Q\Sigma\}
\]
is independent of the choice of $K$.
On the other hand, minimization of \eqref{eq:alternate} for specific $Q$, but without the constraint that $\E\{xx'\}=\Sigma$, was studied by Willems \cite{willems1971least} and is intimately related to the {\em maximal} solution of the Algebraic Riccati Equation (ARE)
\begin{equation}\label{ARE}
A'\Pi+\Pi A-\Pi BB'\Pi+Q=0.
\end{equation}
Under the assumption that the Hamiltonian matrix
$$H=\left[\begin{matrix} A &-BB'\\-Q & -A'\end{matrix}\right]
$$
has no pure imaginary eigenvalues, Willems' result states  that $A-BB'\Pi$ is Hurwitz and that \eqref{statoptcontr} is the optimal solution.

Thus, starting from a symmetric matrix $\Pi$ as in Proposition \ref{prop:prop1}, we can define $Q$ using
\[
Q=-A'\Pi-\Pi A+\Pi BB'\Pi.
\]
Since by Willems' results, \eqref{ARE} has at most one ``stabilizing'' solution $\Pi$, the matrix in the proposition coincides with the maximal solution to \eqref{ARE}. Therefore, if our original problem has a solution, this same solution can be recovered by solving for the maximal solution of a corresponding ARE, for a particular choice of $Q$.
Interestingly, neither $\Pi$ nor $Q$, corresponding to an optimal control law for which \eqref{sigmastat} holds, are unique, whereas $K$ is. The computation and the uniqueness of the optimal gain $K$ will be discussed later on in Section \ref{sec:mestationary}.
}

\section{Controllability of state statistics}\label{sec:variational}

We now return to the ``controllability'' question
of whether there exist admissible control
to steer the controlled evolution
\begin{align}\label{eq:stochasticsystem}
dx(t)=&Ax(t)dt+Bu(t)dt + B_1dw(t)\\\nonumber &\mbox{with } x(0)=x_0 \mbox{ a.s. }
\end{align}
to a target Gaussian distribution at the end of a
finite interval $[0,\,T]$, or, for the stationary case, whether a stationary Gaussian distribution can be achieved by constant state feedback.
From now on, we assume that
$A\in\mR^{n\times n}$, $B\in\mR^{n\times m}$ and $B_1\in\mR^{n\times p}$, are time-invariant and that $(A,B)$ is controllable.
In view of the earlier analysis, we search over controls that are linear functions of the state, i.e.,
\begin{equation}\label{eq:feedback}
u(t)=-K(t) x(t), \;\mbox{ for }t\in[0,T],
\end{equation}
and where $K$ is constant and $A-BK$ Hurwitz for the stationary case.

\subsection{Finite-interval steering by state-feedback}\label{sec:feasibility1}

We assume that $\E\{x_0\}=0$ while $\E\{x_0x_0'\}=\Sigma_0$. The state covariance
\[\Sigma(t):=\E\{x(t)x(t)'\}
\]
of \eqref{controlled}
%{eq:stochasticsystem}
with input as in \eqref{eq:feedback} satisfies the Lyapunov differential equation
\begin{equation}\label{eq:covariancedynamics}
\dot\Sigma(t)=(A-BK(t))\Sigma(t)+\Sigma(t)(A-BK(t))'+B_1B_1'
\end{equation}
and $\Sigma(0)=\Sigma_0$. Regardless of the choice of
$K(t)$, \eqref{eq:covariancedynamics} specifies dynamics that leave the cone of positive semi-definite symmetric matrices
\[
\cS_n^+:=\{\Sigma \mid \Sigma\in\mR^{n\times n},\;\Sigma=\Sigma'\geq 0\}
\]
invariant. To see this, note that the solution to \eqref{eq:covariancedynamics} is of the form
\[
\Sigma(t)=\hat\Phi(t,0)\Sigma_0\hat\Phi(t,0)' +
\int_0^t\hat\Phi(t,\tau)B_1B_1'\hat\Phi(t,\tau)' d\tau
\]
where $\hat\Phi(t,0)$ satisfies
\[
\frac{\partial \hat\Phi(t,0)}{\partial t}=(A-BK(t))\hat\Phi(t,0)
\]
and $\hat\Phi(0,0)=I$, the identity matrix; i.e., $\hat\Phi(t,0)$ is the state-transition matrix of the system $\dot x(t)=(A-BK(t))x(t)$.

Assuming $\Sigma_0>0$, it follows that $\Sigma(t)>0$ for all $t$ and finite $K(\cdot)$. Our interest is in our ability to specify $\Sigma(T)$ via a suitable choice of $K(t)$. To this end, we define
\[
U(t):=-\Sigma(t)K(t)',
\]
we observe that $U(t)$ and $K(t)$ are in bijective correspondence provided that $\Sigma(t)>0$, and we now consider the differential Lyapunov system
\begin{equation}\label{eq:diffLyapunov}
\dot\Sigma(t)=A\Sigma(t)+\Sigma(t)A'+BU(t)'+U(t)B'.
\end{equation}
Reachability/controllability of a differential system such as \eqref{eq:linearsystem}, or \eqref{eq:diffLyapunov}, is the property that with suitable bounded control input $u(t)$, or $U(t)$, respectively, the solution can be driven to any finite value. Interestingly, if any of \eqref{eq:linearsystem} and \eqref{eq:diffLyapunov} is controllable, so is the other. But, more importantly, when \eqref{eq:diffLyapunov} is controllable, the control authority allowed is such that steering from one value for the covariance to another can be done by remaining within the non-negative cone. This is stated as our first theorem below.

\begin{thm}\label{thm:thm1}
The Lyapunov system \eqref{eq:diffLyapunov} is controllable iff $(A,B)$ is a controllable pair. Furthermore, if \eqref{eq:diffLyapunov} is controllable, given any two positive definite matrices $\Sigma_0$ and $\Sigma_T$ and an arbitrary $Q\geq 0$, there is a smooth input $U(\cdot)$ so that the solution of
the (forced) differential equation
\begin{equation}\label{eq:diffLyapunov2}
\dot\Sigma(t)=A\Sigma(t)+\Sigma(t)A'+BU(t)'+U(t)B' +Q
\end{equation}
satisfies the boundary conditions $\Sigma(0)=\Sigma_0$ and $\Sigma(T)=\Sigma_T$
and $\Sigma(t)>0$ for all $t\in[0,T]$.
\end{thm}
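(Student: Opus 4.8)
The plan is to read \eqref{eq:diffLyapunov} as a linear time-invariant control system on the space $\cS_n$ of symmetric $n\times n$ matrices, with drift $\Sigma\mapsto A\Sigma+\Sigma A'$ and input operator $U\mapsto BU'+UB'$. The reachable subspace $\mathcal W\subseteq\cS_n$ from the origin is the smallest drift-invariant subspace containing the range of the input operator; dualizing in the trace inner product $\langle X,Y\rangle=\tr(XY)$, one has $\mathcal{A}^\ast M=A'M+MA$ and $\mathcal{B}^\ast M=2B'M$, so $\mathcal W^\perp$ is the largest subspace invariant under $M\mapsto A'M+MA$ and contained in $\{M=M'\mid B'M=0\}$. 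Iterating the invariance gives $\mathcal W^\perp=\{M=M'\mid MA^kB=0,\ k\ge0\}=\{M=M'\mid M\,\mathcal C=0\}$, where $\mathcal C$ is the controllable subspace of $(A,B)$. Hence $\mathcal W=\cS_n$ iff $\mathcal C=\mR^n$; and since the reachable subspaces of \eqref{eq:linearsystem} and of \eqref{eq:diffLyapunov} are both governed by the single matrix $[B,AB,\dots,A^{n-1}B]$, the two systems are controllable together. I would then recall the standard fact that a controllable linear system reaches every state in any prescribed time by a smooth input, and that this is unaffected by adding the fixed forcing $Q$ in \eqref{eq:diffLyapunov2}.

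\textbf{Part 2: the key reformulation.} The main idea is to stop searching over $U$ directly and instead search over feedback gains: write $U(t)=-\Sigma(t)K(t)'$ and look for a smooth $F(t)=A-BK(t)$ with $\dot\Sigma=F\Sigma+\Sigma F'+Q$, $\Sigma(0)=\Sigma_0$, $\Sigma(T)=\Sigma_T$. By variation of constants, $\Sigma(t)=\hat\Phi(t,0)\Sigma_0\hat\Phi(t,0)'+\int_0^t\hat\Phi(t,s)\,Q\,\hat\Phi(t,s)'\,ds$, where $\hat\Phi$ is the state-transition matrix of $F$; since $\Sigma_0>0$ and $Q\ge0$, positivity of $\Sigma(t)$ on all of $[0,T]$ is then \emph{automatic}. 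What remains is purely a reachability question. The constraint on $F$ is precisely that every column of $F-A$ lie in $\range(B)$; in a basis adapted to $\mR^n=\range(B)\oplus\range(B)^\perp$ this says the lower block-rows of $F$ are frozen to those of $A$, while the upper block-rows are free.

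\textbf{Part 2: the induction.} I would induct on $n$. If $\range(B)=\mR^n$ (base case), any smooth positive-definite curve $\Sigma(\cdot)$ with the prescribed endpoints works: set $F(t)=\tfrac12\big(\dot\Sigma(t)-Q\big)\Sigma(t)^{-1}$ and recover $K$. If $\range(B)\subsetneq\mR^n$, reading off the lower-right block of $\dot\Sigma=F\Sigma+\Sigma F'+Q$ gives $\dot\Sigma_{22}=A_{22}\Sigma_{22}+\Sigma_{22}A_{22}'+A_{21}\Sigma_{12}+\Sigma_{12}'A_{21}'+Q_{22}$, which is again a differential Lyapunov system of the form \eqref{eq:diffLyapunov2} for the pair $(A_{22},A_{21})$ — controllable, by the Hautus/Kalman-decomposition characterization of controllability of $(A,B)$ — with forcing $Q_{22}\ge0$ and endpoints $(\Sigma_0)_{22},(\Sigma_T)_{22}>0$. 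The induction hypothesis yields a smooth ``input'' $\Sigma_{12}(\cdot)$ for this subsystem with $\Sigma_{22}(t)>0$ throughout and $\Sigma_{22}(T)=(\Sigma_T)_{22}$; one must also force $\Sigma_{12}$ to take the prescribed endpoint values $(\Sigma_0)_{12},(\Sigma_T)_{12}$, which is why the induction hypothesis has to be carried in a slightly strengthened form (the subsystem's input, and by iteration finitely many of its derivatives, can be prescribed at $t=0,T$ — harmless, supplied by Hermite interpolation in the base case). Finally pick $\Sigma_{11}(\cdot)$ smooth with the correct endpoints and large enough on the interior that $\Sigma_{11}(t)\succ\Sigma_{12}(t)\Sigma_{22}(t)^{-1}\Sigma_{12}(t)'$ (Schur complement: this holds near the endpoints because $\Sigma_0,\Sigma_T>0$, and a large interior bump handles the rest); then the $(1,1)$- and $(1,2)$-block equations determine $F_{11}(t),F_{12}(t)$ uniquely (using invertibility of $\Sigma(t)$), hence $K(t)$ and $U(t)=-\Sigma(t)K(t)'$.

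\textbf{The main obstacle.} The real difficulty is the requirement $\Sigma(t)>0$ on the entire interval: controllability of \eqref{eq:diffLyapunov} alone lets one hit $\Sigma_T$ from $\Sigma_0$ but possibly along a path leaving $\cS_n^+$, and indeed not every smooth positive-definite curve is realizable, since the ``defect'' $\dot\Sigma-A\Sigma-\Sigma A'-Q$ must lie in $\{BV'+VB'\}$, i.e.\ have vanishing $\range(B)^\perp$-block. The feedback reformulation is what defuses this — it makes positivity free and transfers the whole residual obstruction onto the lower-dimensional ``uncontrolled'' subsystem, where the induction resolves it; the only genuine bookkeeping is propagating the prescribed endpoint data of the subsystem's input through the recursion.
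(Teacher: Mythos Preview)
Your argument is correct and follows the same overall architecture as the paper's proof: establish controllability of the Lyapunov system on $\cS_n$, then run an induction on dimension in which the ``uncontrolled'' block of $\Sigma$ satisfies a smaller Lyapunov system whose input is the off-diagonal block, and finally choose the remaining free block large enough to force $\Sigma(t)>0$ via a Schur-complement. The differences are in execution. For Part~1 the paper changes coordinates to $S(t)=e^{-At}\Sigma(t)e^{-A't}$ and argues via the controllability Gramian of $(A,B)$, whereas you dualize in the trace inner product and identify $\mathcal W^\perp=\{M=M'\mid M\,[B,AB,\dots]=0\}$; both are standard and equally short. For Part~2 the paper first uses Heymann's lemma and a feedback transformation to reduce to the single-input Brunovsky pair $(A_k,B_k)$ of \eqref{eq:shift}, so that each inductive step peels off exactly one dimension and the sub-pair is again Brunovsky; you instead block-decompose with respect to $\range(B)\oplus\range(B)^\perp$ and invoke the Hautus test to see that the resulting sub-pair $(A_{22},A_{21})$ is controllable, reducing the dimension by $\operatorname{rank}B$ at each step. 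Your route avoids Heymann's lemma entirely, at the price of a slightly less explicit base case. One point on which you are in fact more careful than the paper: to carry the induction one must be able to prescribe the endpoint values of the sub-system's input (since that input is $\Sigma_{12}$, which is pinned by $\Sigma_0,\Sigma_T$), and this requirement cascades so that at depth $j$ one needs $j$ derivatives of the input prescribed at $t=0,T$; the paper states only the zeroth-order version of this strengthening, while you correctly flag the need for Hermite-type interpolation of finitely many derivatives in the base case. The ``feedback reformulation'' paragraph in your Part~2 is good motivation but is not actually used in your construction---positivity is secured exactly as in the paper, by the Schur-complement choice of $\Sigma_{11}$, not by the variation-of-constants formula.
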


\begin{proof}
We first establish equivalence of the controllability of \eqref{eq:linearsystem} and \eqref{eq:diffLyapunov}. Define
$
S(t):=e^{-At}\Sigma(t)e^{-A't}.
$
In these new ``coordinates'' \eqref{eq:diffLyapunov} becomes
\[
\dot S(t)=e^{-At}BU(t)'e^{-A't}+e^{-At}U(t)B'e^{-A't},
\]
and upon re-naming $V(t)=e^{-At}U(t)$ as the input,
\begin{equation}\label{eq:complete}
\dot S(t)=e^{-At}BV(t)'+V(t)B'e^{-A't}.
\end{equation}
Assuming that $(A,B)$ is a controllably pair, the system
\begin{equation}\label{eq:half}
\dot X(t)=e^{-At}BV(t)',
\end{equation}
where each column of $V(t)'$ serves as input that drives the corresponding column of $X(t)$ is clearly
controllable since the controllability grammian
\[
G(T):=\int_0^Te^{-A\tau}BB'e^{-A'\tau}d\tau
\]
is invertible. Thus, by a suitable choice of $V(t)$ we can drive  \eqref{eq:half} to any final state $X(T)$ and, thus, we can drive \eqref{eq:complete} to any final state $S(T)=X(T)+X(T)'$.

The converse is straightforward. If $(A,B)$ is not controllable, then there is a matrix $C$ such that $Ce^{-At}B=0$. It follows that $C\dot S(t)C'=0$ and therefore $S(t)$ remains invariant when restricted to a certain subspace.

We now want to establish that there is a control input $U(t)$ so that the solution to \eqref{eq:diffLyapunov2} remains within the positive cone and satisfies the boundary conditions. We show this, and in fact, a stronger argument for a special case where $A$ is a shift matrix and $B$ is vectorial, and then explain why the general case can be reduced to this one.

So, we now establish that there is a smooth (infinitely differentiable) control input $U(t)$ so that $\Sigma(t)$ remains within the positive cone and satisfies the boundary conditions. We further claim (and show below) that such a control can always be chosen to satisfy arbitrary starting and ending boundary conditions $U(0)$ and $U(T)$ of its own. We show this for the special case where $A$ is a shift matrix of size $k$.  For specificity in the steps of the proof, we subscribe the size of matrices in the notation
\begin{equation}\label{eq:shift}
A_k:=\left[\begin{matrix} 0_{k-1} & I_{k-1}\\ 0 & 0_{k-1}^\prime\end{matrix}\right]\mbox{ and }
B_k:=\left[\begin{matrix} 0_{k-1} \\1\end{matrix}\right].
\end{equation}
Here also, $I_k$ denotes the identity matrix of size $k$, and $0_k$ the column vector of size $k$ that has all entries zero.
We will show by induction on $k$ that, for any $k\times k$ matrix $Q_k\geq 0$, the system
\begin{equation}\label{eq:kth}
\dot\Sigma(t)=A_k\Sigma(t)+\Sigma(t)A_k'+B_kU_k(t)'+U_k(t)B_k' +Q_k
\end{equation}
can be steered between positive-definite boundary values while $\Sigma(t)$, which is now $k\times k$, remains positive-definite and the control satisfies arbitrary starting and ending values. The statement is true for $k=1$. In this case, the system is in the form
\begin{equation}\label{eq:U1}
\dot\Sigma(t)=2U_1(t)+Q
\end{equation}
with all entries scalar. Positivity of $\Sigma(t)$ dictates that
\[
\Sigma_0+ 2\int_0^t U_1(\tau)d\tau +Qt>0 \mbox{ for all }t,
\]
while the boundary conditions dictate that
\[
\Sigma_0+ 2\int_0^T U_1(\tau)d\tau +QT=\Sigma_T.
\]
Clearly, these can be met along with any boundary conditions on $U_1(t)$ along with the smoothness requirement.
An example of such an interpolating function is $\Sigma(t)=e^{h(t)}>0$ where
    \begin{eqnarray*}
        h(t)&=&a_0+b_0 t+\frac{a_T-a_0-Tb_0}{T^2}t^2\\
        &&+\frac{Tb_0+Tb_T-2a_T+2a_0}{T^3}t^2(t-T)
    \end{eqnarray*}
and
    \begin{eqnarray*}
        a_0 &=& \log(\Sigma_0) \\
        a_T &=& \log(\Sigma_T) \\
        b_0 &=& (2U_1(0)+Q)/{\Sigma_0}\\
        b_T &=& (2U_1(T)+Q)/{\Sigma_T}.
    \end{eqnarray*}
The polynomial $h(t)$ is in fact a Hermite polynomial satisfying
    \begin{eqnarray*}
    h(0) &=& a_0,~~h(T)=a_T\\
    \dot{h}(0) &=& b_0,~~\dot{h}(T)=b_T.
    \end{eqnarray*}
It is easy to see that $\Sigma(t)=e^{h(t)}$ satisfies
    \begin{eqnarray*}
        \Sigma(0)&=&\Sigma_0,~~\Sigma(T)=\Sigma_T\\
        \dot{\Sigma}(0) &=& 2U_1(0)+Q\\
        \dot{\Sigma}(T) &=& 2U_1(T)+Q,
    \end{eqnarray*}
and $U_1(t)$ can be computed from \eqref{eq:U1}.

We now assume that the claim is valid for $k=n-1$ and argue that it is also true for $k=n$.
Before we do so we note that, for any size of matrices, \eqref{eq:diffLyapunov2}
implies that
\begin{equation}\label{eq:kth_b}
\Pi\dot\Sigma(t)\Pi=\Pi A\Sigma(t)\Pi+\Pi \Sigma(t)A'\Pi +\Pi Q\Pi,
\end{equation}
where $\Pi:=\Pi_{{\mathcal R}(B)^\perp}$ is the projection onto the orthogonal compelement of the range of $B$, since $\Pi B=0$. Conversely, if \eqref{eq:kth_b} holds, there exists a $U(t)$ so that \eqref{eq:diffLyapunov2} holds. To see this, note that
the map
\begin{equation}\label{eq:g}
\g_B\;:\;\cS_n\to\cS_n\;:\; Y\mapsto \Pi_{{\mathcal R}(B)^\perp}Y\Pi_{{\mathcal R}(B)^\perp}
\end{equation}
is self-adjoint. Throughout, $\mathcal{S}_n$ denotes the linear vector space of symmetric matrices of dimension $n$,
\[
\Pi_{\cR(B)^\perp}:=I-B(B'B)^{-1}B'
\]
denotes the projection onto the orthogonal complement of the range of $B$ (when $B$ is singular, the inverse needs to be replace by a pseudoinverse), and
where $I$ denotes identity matrix.
Since $\g_B$ is self-adjoint, the orthogonal complement of its range is precisely its null space, which according to the lemma in Appendix \ref{sec:appendix}, is also the range of
\begin{equation}\label{eq:f}
\f_B\;:\;\mR^{n\times m}\to \mathcal{S}_n\;:\;
X\mapsto BX'+XB'.
\end{equation}
But
\[
\dot\Sigma(t)-(A\Sigma(t)+\Sigma(t)A'+Q)
\]
when projected onto the range of $\g_B$ is identically zero (since \eqref{eq:kth_b} holds). Hence, \eqref{eq:diffLyapunov2} also holds for a suitable $U(t)$. (In other words, the extra directions that \eqref{eq:kth_b} does not already restrict can be freely adjusted by a proper choice of $U(t)$ since they are in the range of $\f_B$.) The fact that we can always select $U(t)$ to be smooth, provided of course that $\Sigma(t)$ is smooth, follows since $\g_B$ is linear. Also, similarly as in the $k=1$ case, we can select $U(t)$ to satisfy arbitrary boundary conditions $U(0)$ and $U(T)$ of its own.

Let us now return to the induction argument. Equation \eqref{eq:kth} for $k=n$, is equivalent to
\begin{equation}\label{eq:kth2c}
\Pi_n\dot\Sigma(t)\Pi_n=\Pi_n A_n\Sigma(t)\Pi_n+\Pi_n \Sigma(t)A_n'\Pi_n +\Pi_n Q_n\Pi_n
\end{equation}
where
\[
\Pi_n =\left[\begin{matrix} I_{n-1} &0_{n-1}\\0_{n-1}^\prime & 0\end{matrix}\right].
\]
If we partition
\[
\Sigma(t)=\left[\begin{matrix}\Sigma_1(t) &\sigma_2(t)\\\sigma_2(t)^\prime & \sigma_3(t)\end{matrix}\right]
\]
where $\Sigma_1$ is $(n-1)\times(n-1)$, $\sigma_2$ is a column vector, and $\sigma_3$ a scalar, then
\eqref{eq:kth2c} becomes
\begin{align}\nonumber
\left[\begin{matrix}\dot\Sigma_1(t) &0_{n-1}\\0_{n-1}^\prime &0\end{matrix}\right]
&=M
\left[\begin{matrix}\Sigma_1(t) &0_{n-1}\\\sigma_2(t)^\prime & 0\end{matrix}\right]+ \left[\begin{matrix}\Sigma_1(t) &\sigma_2(t)\\0_{n-1}^\prime & 0\end{matrix}\right]M^\prime\\ & \phantom{ = }\;+
\left[\begin{matrix}Q_1 &0_{n-1}\\0_{n-1}^\prime &0\end{matrix}\right]\label{eq:reduced}
\end{align}
where $Q_1$ is the $(n-1)\times(n-1)$ block of $Q$ and
\begin{align*}
M&=\Pi A_n\\
&=\left[\begin{matrix} 0_{n-1}& I_{n-1} &\\0& 0_{n-1}^\prime \end{matrix}\right]\\
&=\left[\begin{matrix} A_{n-1} &B_{n-1}\\0_{n-1}^\prime&0 \end{matrix}\right]
\end{align*}
after we group its entries consistent with the partition of $\Sigma$. But now, \eqref{eq:reduced} is in the form
\[
\dot\Sigma_1(t)=A_{n-1}\Sigma_1(t)+\Sigma_1(t)A_{n-1}^\prime +
B_1\sigma_2(t)^\prime +\sigma_2(t)B_1^\prime+Q_1.
\]
Since the matrices in this one are of size $(n-1)\times(n-1)$, by our hypothesis, we can find a control $U(t)$ which will then identify with $\sigma_2(t)$. The boundary conditions for $U(t)$ are dictated by the boundary conditions for $\Sigma(t)$. The final entry of $\Sigma(t)$, $\sigma_3(t)$ is not restricted in any way other than being in agreement with the boundary conditions of $\Sigma$. The values are the two ends, $\sigma_3(0)$ and $\sigma_3(T)$ are admissible since $\Sigma_0>0$ as well as $\Sigma_T>0$. Thus, we can choose a smooth function for $\sigma_3(t)$ that takes values large enough in $(0,T)$ so that $\Sigma(t)>0$ throughout.

A final point is needed to complete the proof. For an arbitrary controllable pair $(A,B)$
it is well known that there exists a constant $K$ and a vector $v$ such that $(A-BK,Bv)$ is controllable (Heymann's lemma, see \cite{hautus1977simple}). Further, $K$ can be chosen so that $A-BK$ has all eigenvalues at the origin, hence it is equivalent to a shift matrix. Thus, we can choose $K$ and $v$ such that,
after a similarity transformation, $(A-BK,Bv)$ becomes $(A_n,B_n)$ (in the notation of \eqref{eq:shift}). The statement of the theorem is invariant to similarity transformation as well as to action of the feedback group $A\mapsto A-BK$. Further, replacing $B$ with $Bv$ corresponds to selecting a portion of the allowed control authority, and we have already shown the theorem for this case which is more stringent. This completes the proof.
\end{proof}

\subsubsection*{Finite-interval steering via external input}\label{sec:finite_external}
It is interesting to observe an equivalence between steering the state-covariance of a stochastic system by state-feedback, and modeling changes in the state covariance as due to an external input process for the case where $B=B_1$.
Specifically, given the Gauss-Markov model
\[
dx(t)=Ax(t)dt+Bdy(t)
\]
and a path for the evolution of its state-covariance $\{\Sigma(t) \mid t\in[0,T]\}$ that satisfies \eqref{eq:diffLyapunov2} for some $U(t)$, we are interested in a possible external input process $y(t)$ that is responsible for steering the state covariance through the specified path. That is, we want to model the state-evolution by postulating a suitable process $y(t)$.
We observe that the Gauss-Markov process
\begin{eqnarray}\label{eq:filter0}
 d\xi(t)&=&(A-BK(t))\xi(t)dt+Bdw(t)\\\nonumber
 dy(t)&=&-K(t)\xi(t)dt+dw(t),
\end{eqnarray}
with $\E\{\xi(0)\xi(0)'\}=\Sigma_0$
and
\[
K(t)=-U(t)'\Sigma(t)^{-1}.
\]
It follows that
\[
d\xi(t)=A\xi(t)dt+Bdy(t)
\]
and therefore $\xi(t)$ and $x(t)$ share the same statistics. On the other hand,
the state covariance of \eqref{eq:filter0} satisfies \eqref{eq:diffLyapunov2}.

\subsection{Assignability of stationary state covariances via state-feedback}\label{sec:feasibility2}

We are interested to steer and maintain the system through static state-feedback
 \begin{equation}\label{eq:feedback2}
 u(t)=-Kx(t)
 \end{equation}
at an equilibrium distribution with a given state-covariance $\Sigma$. Due to linearity, the distribution will then be Gaussian. It is clear that depending on the value of $\Sigma$ this may not always possible. The family of admissible stationary state-covariances are given below.

Assuming that $A-BK$ is a Hurwitz matrix, which is necessary for
the state process $\{x(t)\mid t\in[0,\infty)\}$ to be stationary, the (stationary) state-covariance $\Sigma=\E\{x(t)x(t)'\}$ satisfies the algebraic Lyapunov equation
\begin{equation}\label{eq:lyapunov2}
(A-BK)\Sigma + \Sigma (A-BK)'=-B_1B_1'.
\end{equation}
Thus, the equation
\begin{subequations}\label{eq:equivalent}
\begin{eqnarray}\label{eq:lyapunov3}
&&A\Sigma + \Sigma A'+B_1B_1'+BX'+XB'=0\\
&&\mbox{can be solved for $X$},\nonumber
\end{eqnarray}
which in particular can be taken
to be $X=-\Sigma K'$. The solvability of \eqref{eq:lyapunov3} is obviously a necessary condition for $\Sigma$ to qualify as a stationary state-covariance attained via feedback. Alternatively, \eqref{eq:lyapunov3} is equivalent to saying that
\begin{equation}\label{eq:range}
A\Sigma + \Sigma A'+B_1B_1'\in {\mathcal R}(\f_B).
\end{equation}
The latter can be expressed as a rank condition \cite[Proposition 1]{georgiou2002structure} in the form
\begin{equation}\label{eq:rank2}
{\rm rank}\left[\begin{matrix}
A\Sigma+\Sigma A'+B_1B_1' & B\\
B & 0
\end{matrix}\right]
=
{\rm rank}\left[\begin{matrix}
0 & B\\
B & 0
\end{matrix}\right].
\end{equation}
Also, in view of Lemma \ref{lemma:space}, \eqref{eq:range} is equivalent to
\begin{equation}\label{eq:null}
A\Sigma + \Sigma A'+B_1B_1'\in {\mathcal N}(\g_B).
\end{equation}
\end{subequations}
Therefore, the conditions (\ref{eq:lyapunov2}-\ref{eq:null}), which are all equivalent, are necessary for the existence of a state-feedback gain $K$ that ensures $\Sigma>0$ to be the stationary state covariance of \eqref{controlled}.
%{eq:stochasticsystem}

{ Conversely, given $\Sigma>0$ that satisfies \eqref{eq:equivalent} and $X$ the solution to \eqref{eq:lyapunov3},
then \eqref{eq:lyapunov2} holds with $K=-X'\Sigma^{-1}$. Provided $A-BK$ is a Hurwitz matrix, $\Sigma$ is admissible stationary covariance. The property of $A-BK$ being Hurwitz can be guaranteed when
$(A-BK,\,B_1)$ is a controllable pair. In turn, controllability of $(A-BK,\,B_1)$ is guaranteed when $\cR(B)\subseteq \cR(B_1)$. Thus, we have established the following.}

\begin{thm}\label{admissiblestate3}
Consider the Gauss-Markov model \eqref{controlled}
%{eq:stochasticsystem}
and assume that $\cR(B)\subseteq \cR(B_1)$. A positive-definite matrix $\Sigma$ can be assigned as the stationary state covariance via a suitable choice of state-feedback if and only if $\Sigma$ satisfies any of the equivalent statements (\ref{eq:lyapunov3}-\ref{eq:null}). \end{thm}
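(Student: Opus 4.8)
\medskip
\noindent\textbf{Proof proposal.}
The plan is to prove the two implications separately; \emph{necessity} is already essentially contained in the discussion preceding the theorem, while \emph{sufficiency} needs one genuinely new ingredient. For necessity, suppose $\Sigma>0$ is the stationary state covariance attained with a gain $K$ for which $A-BK$ is Hurwitz. Then $\Sigma$ satisfies the algebraic Lyapunov equation \eqref{eq:lyapunov2}, so $X:=-\Sigma K'$ solves \eqref{eq:lyapunov3}; the equivalences among \eqref{eq:lyapunov3}, \eqref{eq:range}, \eqref{eq:rank2} and \eqref{eq:null} have already been recorded (via $\f_B$, $\g_B$ and \cite[Proposition~1]{georgiou2002structure}), so nothing more is needed here.

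For sufficiency, start from $\Sigma>0$ satisfying \eqref{eq:lyapunov3}, fix any solution $X$, and put $K:=-X'\Sigma^{-1}$. A direct substitution shows that $(A-BK)\Sigma+\Sigma(A-BK)'+B_1B_1'=0$, so it only remains to certify that $A-BK$ can be taken Hurwitz. Here I would invoke the classical fact that if $M\Sigma+\Sigma M'+B_1B_1'=0$ with $\Sigma>0$, then the spectrum of $M$ lies in the closed left half-plane, and $M$ is actually Hurwitz provided $(M,B_1)$ is controllable: indeed, a left eigenvector $w$ of $M$ with eigenvalue $\mu$ gives $2\,(\operatorname{Re}\mu)\,w^{*}\Sigma w=-\|B_1'w\|^{2}$ (with $w^{*}$ the conjugate transpose), so $\operatorname{Re}\mu\le0$ always, while $\operatorname{Re}\mu=0$ forces $B_1'w=0$, i.e.\ $w$ is a common left eigenvector of $M$ and $B_1$, contradicting the Hautus test. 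Thus the whole issue reduces to showing that $(A-BK,B_1)$ is controllable.

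This is where the hypothesis $\cR(B)\subseteq\cR(B_1)$ enters, and it is the crux of the argument. Writing $B=B_1R$ --- possible precisely because $\cR(B)\subseteq\cR(B_1)$ --- we get $A-BK=A-B_1(RK)$, and since the reachable subspace is invariant under state feedback applied through the same input matrix, $(A-BK,B_1)$ is controllable iff $(A,B_1)$ is. Finally $(A,B)$ is controllable by the standing assumption of this section, and $\cR(B)\subseteq\cR(B_1)$ implies $\sum_{k}A^{k}\cR(B)\subseteq\sum_{k}A^{k}\cR(B_1)$, so the reachable subspace of $(A,B_1)$ is all of $\mR^{n}$. Hence $A-BK$ is Hurwitz, the closed-loop diffusion $dx=(A-BK)x\,dt+B_1\,dw$ has a unique invariant Gaussian density, and its covariance --- the unique solution of the Lyapunov equation for the Hurwitz matrix $A-BK$ --- coincides with $\Sigma$, which is what we had to show.

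The step I expect to demand the most care is the controllability reduction $B=B_1R\Rightarrow(A-BK,B_1)\text{ controllable}$: it is the only place where the directionality assumption is genuinely used, and it has the pleasant side effect of making the non-uniqueness of the solution $X$ of \eqref{eq:lyapunov3} harmless --- \emph{every} admissible $X$ yields a Hurwitz closed loop, so no refined selection among solutions is needed. The remaining pieces (the left-eigenvector/Hautus criterion and the uniqueness of the stationary distribution of a Hurwitz linear SDE) are standard.
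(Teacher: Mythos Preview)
Your proposal is correct and follows essentially the same route as the paper. The paper's own argument is the terse paragraph immediately preceding the theorem: necessity via $X=-\Sigma K'$, sufficiency via $K=-X'\Sigma^{-1}$, with Hurwitzness of $A-BK$ deduced from controllability of $(A-BK,B_1)$, which in turn is asserted to follow from $\cR(B)\subseteq\cR(B_1)$; you have simply filled in the details the paper leaves implicit (the eigenvector/Hautus computation, the factorization $B=B_1R$, feedback invariance of the reachable subspace, and the inclusion $\sum_k A^k\cR(B)\subseteq\sum_k A^k\cR(B_1)$ using the standing controllability of $(A,B)$). One cosmetic quibble: the phrase ``common left eigenvector of $M$ and $B_1$'' is imprecise since $B_1$ need not be square; what you mean, and what your computation shows, is that $w^{*}B_1=0$ together with $w^{*}M=\mu w^{*}$, which is exactly the failure of the PBH test for $(M,B_1)$.
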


Interest in \eqref{eq:null} was raised in \cite{hotz1987covariance} where it was shown to characterize state-covariances that can be maintained by state-feedback.
On the other hand, conditions (\ref{eq:lyapunov3}-\ref{eq:rank2}) were obtained in \cite{georgiou2002structure,georgiou2002spectral}, for
the special case when $B=B_1$, as being necessary and sufficient for a positive-definite matrix to materialize as the state covariance of the system driven by a stationary stochastic process (not-necessarily white).  It should be noted that in \cite{georgiou2002structure}, the state matrix $A$ was assumed to be already Hurwitz so as to ensure stationarity of the state process.
However, if the input is generated via feedback as above, $A$ does not need to be Hurwitz whereas, only $A-BK$ needs to be.

\subsubsection*{Assignability via external input}

We now turn to the question of which positive definite matrices materialize as state covariances of the Gauss-Markov model
\begin{equation}\label{eq:short}
dx(t)=Ax(t)+Bdy(t),
\end{equation}
with $(A,B)$ controllable and $A$ Hurwitz, when driven by some stationary stochastic process $y(t)$. The characterization of admissible state covariances was obtained in \cite{georgiou2002structure} and amounts to the condition that
\[A\Sigma+\Sigma A'\in \cR(\f_B)\]
which coincides with the condition that $\Sigma$ can be assigned as in Theorem~\ref{admissiblestate3} by state-feedback. As in Section \ref{sec:finite_external},
a feedback system can be implemented, separate from \eqref{eq:short}, to generate a suitable input processes to give rise to $\Sigma$ as the state covariance of \eqref{eq:short}.
Specifically,
let $X$ be a solution of
\begin{equation}\label{eq:X}
A\Sigma+\Sigma A'+BX'+XB'=0,
\end{equation}
and
\begin{eqnarray}\label{eq:filter}\nonumber
 d\xi(t)&=&(A-BK)\xi(t)dt+Bdw(t)\\\nonumber
 dy(t)&=&-K\xi(t)dt+dw(t)
\end{eqnarray}
with
\begin{equation}\label{eq:K}
K=\frac12 B'\Sigma^{-1} - X'\Sigma^{-1}.
\end{equation}
Trivially,
\[
d\xi(t)=A\xi(t)dt+Bdy(t),
\]
and therefore, $\xi(t)$ shares the same stationary statistics with $x(t)$. But if $S=\E\{\xi(t)\xi(t)'\}$,
\[
(A-BK)S+S(A-BK)'+BB'=0,
\]
which, in view of (\ref{eq:X}-\ref{eq:K}), is satisfied by $S=\Sigma$.

{%\mike
\section{Numerical computation of optimal control}\label{sec:optimalsteering}

Having established feasibility for the problem to steer the state-covariance to a given value at the end of an interval, it is of interest to design efficient methods to compute the optimal controls of Section \ref{sec:variational2}. As an alternative to solving the generalized Schr\"{o}dinger system (\ref{Schr1}-\ref{Schr4}), we formulate the optimization as a semidefinite program in Section \ref{sec:num}, and likewise for the infinite-horizon problem in Section \ref{sec:mestationary}.}

\subsection{Finite interval minimum energy steering of state statistics}\label{sec:num}
We are interested in computing an optimal choice for feedback gain $K(t)$ so that
the control signal $u(t)=-K(t)x(t)$ steers \eqref{controlled}
%{eq:stochasticsystem}
from an initial state-covariance $\Sigma_0$ at $t=0$ to the final $\Sigma_T$ at $t=T$. The expected control-energy functional
    \begin{eqnarray}\label{eq:functional}
       J(u)&:=& \E\left\{\int_0^T u(t)'u(t)dt\right\}\\
       &=&\int_0^T \tr(K(t)\Sigma(t)K(t)')dt\nonumber
    \end{eqnarray}
needs to be optimized over $K(t)$ so that \eqref{eq:covariancedynamics} holds as well as the boundary conditions
\begin{subequations}\begin{equation}\label{eq:boundary}
\Sigma(0)=\Sigma_0, \mbox{ and }\Sigma(T)=\Sigma_T.
\end{equation}

If instead we sought to optimize over $U(t):=-\Sigma(t)K(t)'$ and $\Sigma(t)$, the functional \eqref{eq:functional} becomes
\[
J=\int_0^T \tr(U(t)'\Sigma(t)^{-1}U(t))dt
\]
which is jointly convex in $U(t)$ and $\Sigma(t)$, while \eqref{eq:covariancedynamics} is replaced by
\begin{equation}\label{eq:diffeqB1U}
\dot\Sigma(t)=A\Sigma(t)+\Sigma(t) A'+BU(t)'+U(t)B'+B_1B_1'
\end{equation}
which is now linear in both. Thus, finally, the optimization can be written as a semi-definite program to minimize
\begin{equation}\label{eq:sdp}
 \int_0^T \tr(Y(t))dt
 \end{equation}
 subject to (\ref{eq:boundary}-\ref{eq:diffeqB1U}) and
 \begin{equation}
\left[\begin{matrix}Y(t)& U(t)' \\U(t) & \Sigma(t)\end{matrix}\right]\ge 0.
\end{equation}
\end{subequations}
This can be solved numerically after discretization in time and a corresponding (suboptimal) gain
recovered as $K(t)=-U(t)'\Sigma(t)^{-1}$.

\subsection{Minimum energy control to maintain stationary state statistics}\label{sec:mestationary}

As noted earlier, a positive definite matrix $\Sigma$ is admissible as a stationary state-covariance provided \eqref{eq:lyapunov3} holds for some $X$ and $A+BX'\Sigma^{-1}$ is a Hurwitz matrix. The condition $\cR(B)\subseteq \cR(B_1)$ is a sufficient condition for the latter to be true always, but it may be true even if $\cR(B)\subseteq \cR(B_1)$ fails (see the example in Section \ref{sec:example}). Either way, the expected input power (energy rate)
\begin{eqnarray}
\E\{u'u\}
&=&\tr(K\Sigma K')
\\\nonumber
&=&\tr(X'\Sigma^{-1}X)
\end{eqnarray}
{ in either
in $K$, or $X$.}
Thus, assuming that $\cR(B)\subseteq \cR(B_1)$ holds, and in case \eqref{eq:lyapunov3} has multiple solutions, the optimal constant feedback gain $K$ can be obtained by solving the convex optimization problem
\begin{equation}\label{eq:XSX}
\min\left\{\tr(K\Sigma K')\mid \mbox{ \eqref{eq:lyapunov3} holds } \right\}.
\end{equation}

\begin{remark}
In case $\cR(B)\not\subseteq \cR(B_1)$, the condition that $A-BK$ be Hurwitz needs to be verified separately. If this fails, we cannot guarantee that $\Sigma$ is an admissible stationary state-covariance that can be maintained with constant state-feedback. However, it is always possible to maintain a state-covariance that is arbitrarily close. To see this, consider the control
\[
K_\epsilon = K + \frac12 \epsilon B' \Sigma^{-1}
\]
for $\epsilon>0$. Then, from \eqref{eq:lyapunov2},
\begin{eqnarray*}
(A-BK_\epsilon)\Sigma + \Sigma(A-BK_\epsilon)'&=&-\epsilon BB'-B_1B_1'\\
& \leq& -\epsilon BB'.
\end{eqnarray*}
The fact that $A-BK_\epsilon$ is Hurwitz is obvious. If now $\Sigma_\epsilon$ is the solution to
\begin{eqnarray*}
(A-BK_\epsilon)\Sigma_\epsilon + \Sigma_\epsilon (A-BK_\epsilon)'&=&-B_1B_1'
\end{eqnarray*}
the difference $\Delta=\Sigma-\Sigma_\epsilon\geq 0$ and satisfies
\begin{eqnarray*}
(A-BK_\epsilon)\Delta + \Delta(A-BK_\epsilon)'&=&-\epsilon BB',
\end{eqnarray*}
and hence is of $o(\epsilon)$.
\end{remark}

\section{Example}\label{sec:example}
Consider inertial particles that are modeled by
   \begin{eqnarray*}
       dx(t) &=& v(t)dt + dw(t)\\
       dv(t) &=& u(t)dt.
   \end{eqnarray*}
Here, $u(t)$ is the control input (force) at our disposal, $x(t)$ represents position and $v(t)$ velocity, while $w(t)$ represents random displacement due to impulsive accelerations.
The purpose of the example is to highlight a case where the control is handicapped compared to the effect of noise. Indeed, the displacement $w(t)$ is directly affecting the position while the control effort needs to be integrated before it impacts the position of the particles.

Another interesting aspect of this example is that $\cR(B)\not\subseteq \cR(B_1)$ since
$B=[0,~1]'$ while $B_1=[1,~0]'$. If we choose
\begin{equation}\label{eq:stationaryvalue}
\Sigma_1=\left[\begin{matrix}1&-1/2\\ \hspace*{2pt}-1/2& \phantom{-}1/2\end{matrix}\right]
\end{equation}
as a candidate stationary state-covariance, it can be seen that
\eqref{eq:lyapunov3} has a unique solution $X$ giving rise to  $K=\left[1,~1\right]$ and a stable feedback since $A-BK$ is Hurwitz.

We now wish to steer the spread of the particles from an initial Gaussian distribution with $\Sigma_0=2I$ at $t=0$ to the terminal marginal $\Sigma_1$
at $t=1$, and from there on, since $\Sigma_1$ is an admissible stationary state-covariance, to maintain with constant state-feedback control.

Figure~\ref{fig:Eg1Phase1} displays typical sample paths in phase space, as a function of time, that are attained using the optimal feedback strategy derived following \eqref{eq:sdp} over the time interval $[0,\,1]$.
The corresponding feedback gains $K(t)=[k_1(t),\,k_2(t)]$ are shown in Figure~\ref{fig:Eg1Controlfeedback} as functions of time.
\begin{figure}\begin{center}
\includegraphics[width=0.47\textwidth]{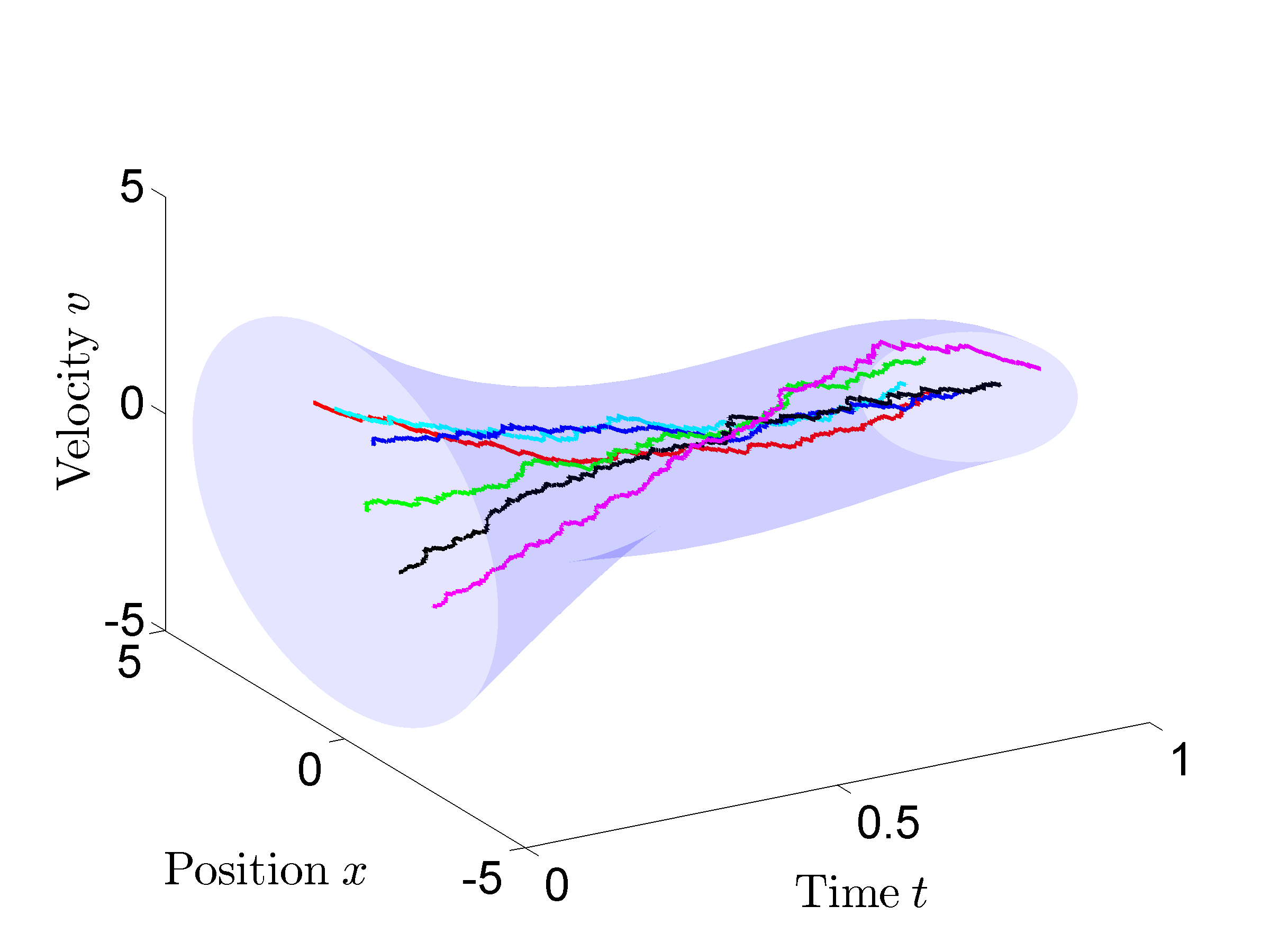}
   \caption{Inertial particles: state trajectories}
   \label{fig:Eg1Phase1}
\end{center}\end{figure}
\begin{figure}\begin{center}
\includegraphics[width=0.47\textwidth]{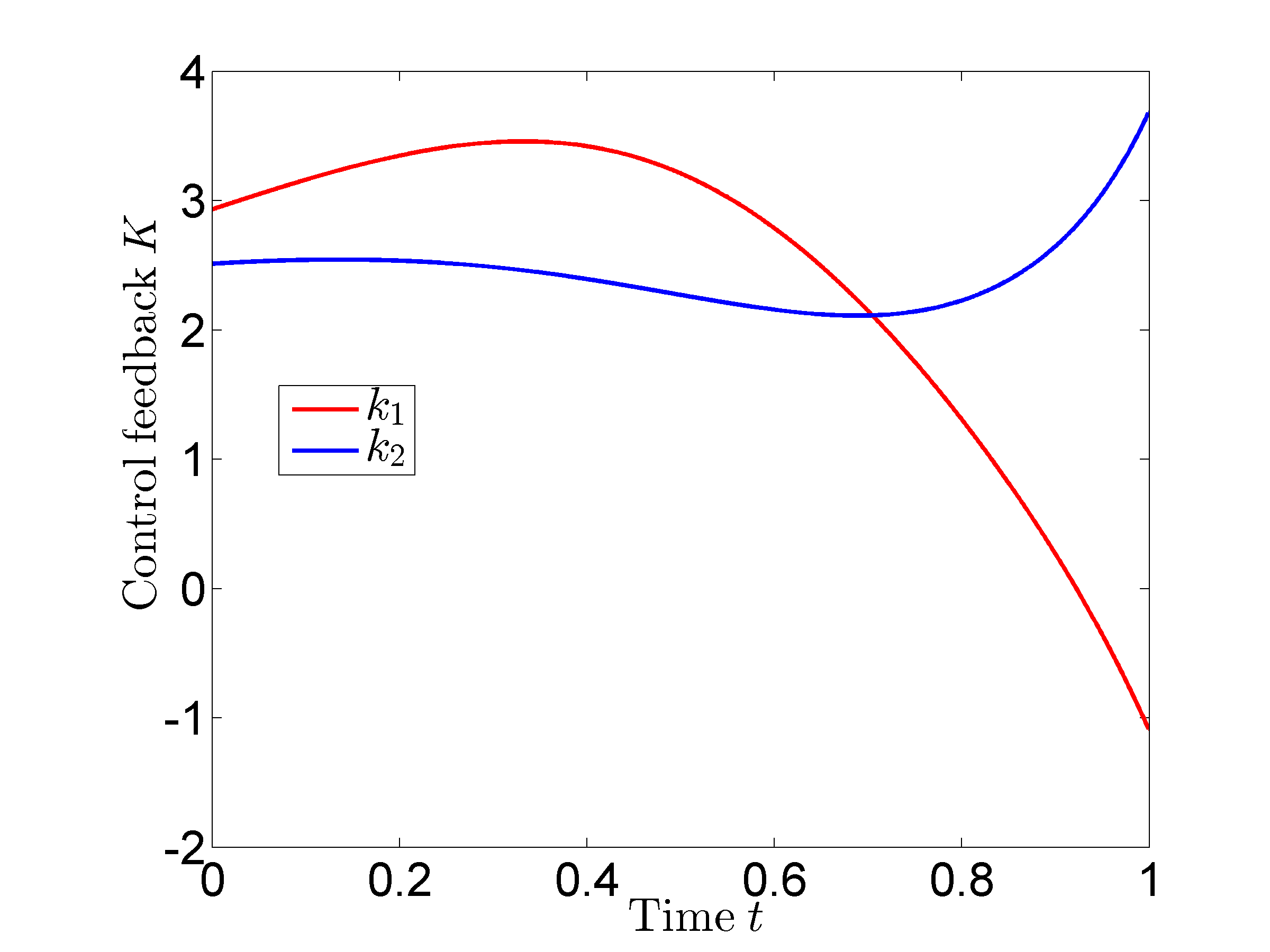}
   \caption{Inertial particles: feedback gains}
   \label{fig:Eg1Controlfeedback}
\end{center}\end{figure}
Past the point $t=1$, the state-covariance of the closed-loop system is maintained at the stationary value in \eqref{eq:stationaryvalue}.
Figure \ref{fig:Eg2Phase1} displays representative sample paths in phase space under the now constant state feedback gain $K=[1,\,1]$ over  time window $[1,\,5]$.
Finally, Figure \ref{fig:Eg2Control1} displays the corresponding control action for each trajectory over the complete time interval $[0,\,5]$, which consists of the ``transient'' interval $[0,\,1]$ to the target (stationary) distribution and the ``stationary'' interval $[1,\,5]$.
\begin{figure}\begin{center}
\includegraphics[width=0.47\textwidth]{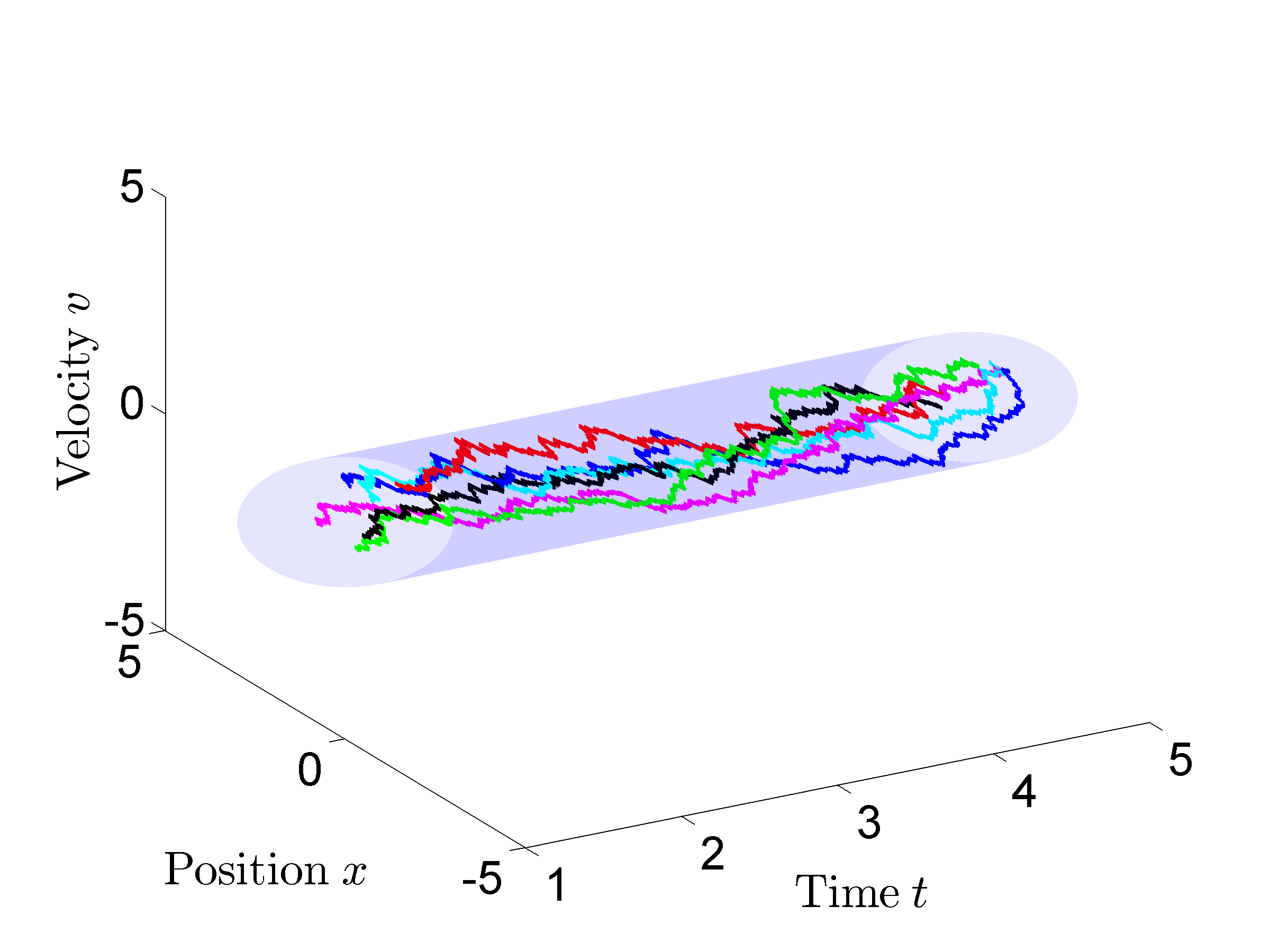}
   \caption{Inertial particles: stationary state trajectories}
   \label{fig:Eg2Phase1}
\end{center}\end{figure}
\begin{figure}\begin{center}
\includegraphics[width=0.47\textwidth]{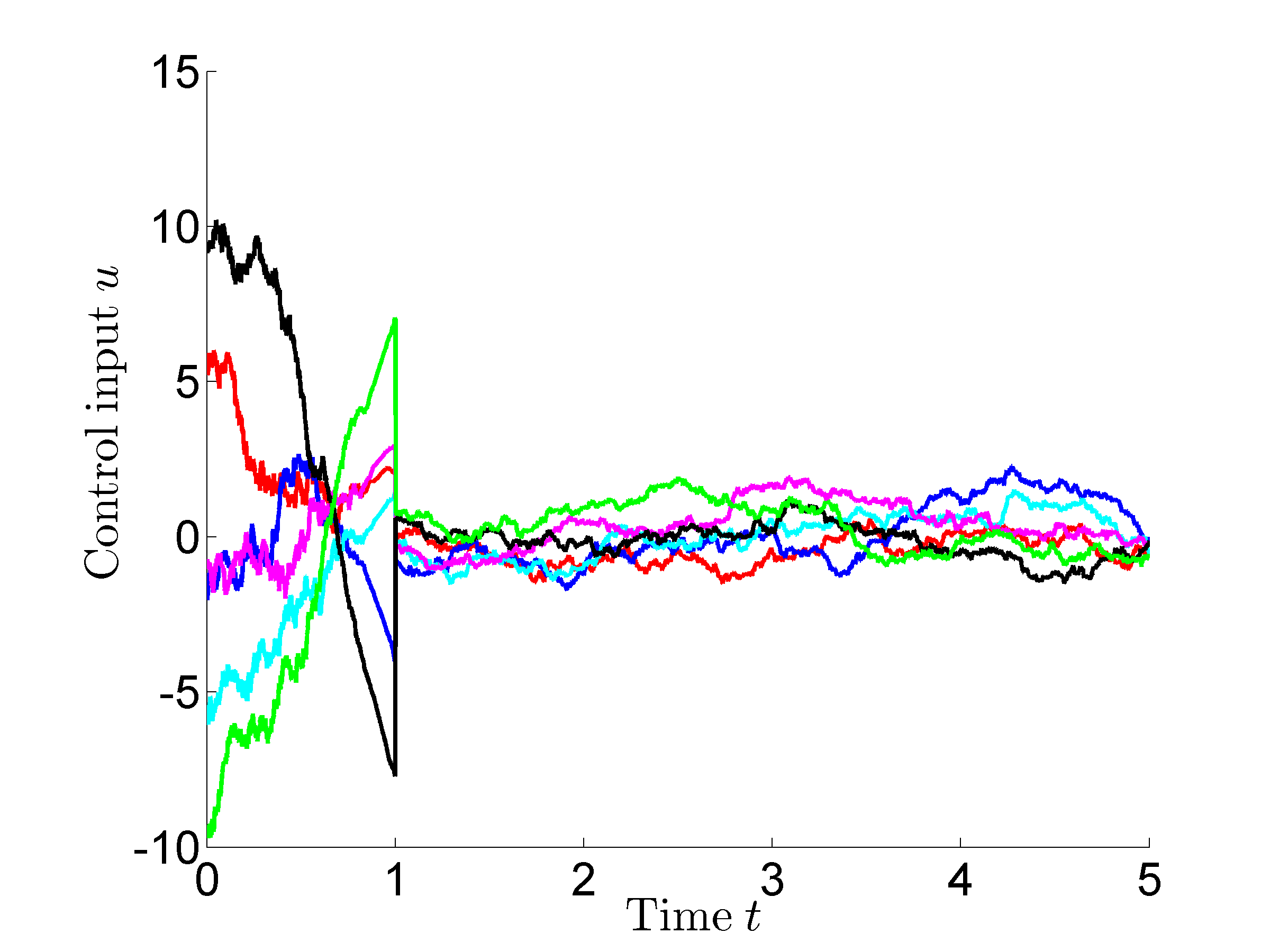}
   \caption{Inertial particles: control inputs}
   \label{fig:Eg2Control1}
\end{center}\end{figure}

\section{Appendix}\label{sec:appendix}

\begin{lemma}\label{lemma:space}Consider the maps $\f_B$ and $\g_B$ defined in (\ref{eq:g}-\ref{eq:f}). The range of $\f_B$ coincides with the null space of $\g_B$, that is,
\[\cR(\f_B)=\cN(\g_B).\]
\end{lemma}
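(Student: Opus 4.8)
The plan is to show the two inclusions $\cR(\f_B)\subseteq\cN(\g_B)$ and $\cN(\g_B)\subseteq\cR(\f_B)$ separately, the first by direct computation and the second by a dimension count combined with self-adjointness. Write $P:=\Pi_{\cR(B)^\perp}=I-B(B'B)^{-1}B'$ (pseudoinverse if $B$ is rank-deficient), so that $PB=0$ and $P=P'=P^2$. For the easy inclusion: given any $X\in\mR^{n\times m}$, we have $\g_B(\f_B(X))=P(BX'+XB')P=(PB)X'P+PX(B'P)=0$ since $PB=0$ and $B'P=(PB)'=0$. Hence $\cR(\f_B)\subseteq\cN(\g_B)$.

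For the reverse inclusion I would argue by dimensions. First, $\g_B$ is self-adjoint on the inner-product space $\cS_n$ (with $\langle Y,Z\rangle=\tr(YZ)$), as already noted in the text, so $\cS_n=\cR(\g_B)\oplus\cN(\g_B)$ orthogonally and $\dim\cN(\g_B)=\dim\cS_n-\operatorname{rank}(\g_B)$. Next I would compute $\operatorname{rank}(\g_B)$: choosing an orthonormal basis adapted to the splitting $\mR^n=\cR(B)\oplus\cR(B)^\perp$, with $r:=\dim\cR(B)$ and $q:=n-r$, the map $Y\mapsto PYP$ simply extracts the lower-right $q\times q$ symmetric block, so $\cR(\g_B)\cong\cS_q$ and $\operatorname{rank}(\g_B)=\binom{q+1}{2}$. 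Therefore $\dim\cN(\g_B)=\binom{n+1}{2}-\binom{q+1}{2}$. On the other side, I need $\dim\cR(\f_B)$. In the same adapted basis, writing $B$ in block form (its columns span $\cR(B)$), the matrices $BX'+XB'$ fill exactly the upper-left $r\times r$ symmetric block together with the full off-diagonal $r\times q$ block, while the lower-right $q\times q$ block stays zero; so $\cR(\f_B)$ has dimension $\binom{r+1}{2}+rq = \binom{n+1}{2}-\binom{q+1}{2}$. Since the two dimensions agree and one space is contained in the other, they coincide.

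Alternatively — and this is perhaps cleaner to write — I would invoke the fact that $\cN(\g_B)=\cR(\g_B)^\perp$ (self-adjointness) and show directly that $\cR(\g_B)^\perp=\cR(\f_B)$ by computing $\cR(\f_B)^\perp$. Indeed $Y\perp\cR(\f_B)$ iff $\tr(Y(BX'+XB'))=0$ for all $X\in\mR^{n\times m}$, i.e. $\tr((YB+BY... )$ — more precisely $2\tr(X'Y B)=0$ wait, $\tr(YBX')+\tr(YXB')=\tr(X'YB)+\tr(B'YX)=2\tr(X'YB)$ using $Y=Y'$, so this vanishes for all $X$ iff $YB=0$ iff $Y$ has its columns in $\cR(B)^\perp$ iff $PY=Y$ iff (using $Y=Y'$) $Y=PYP$ iff $Y\in\cR(\g_B)$. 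Hence $\cR(\f_B)^\perp=\cR(\g_B)$, and taking orthogonal complements gives $\cR(\f_B)=\cR(\g_B)^\perp=\cN(\g_B)$.

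I expect no serious obstacle here; this is a routine finite-dimensional linear-algebra lemma. The only point requiring a little care is the rank-deficient case, where $B(B'B)^{-1}B'$ must be read as the orthogonal projector onto $\cR(B)$ via a pseudoinverse, and one should note that $\f_B$ and $\g_B$ depend on $B$ only through $\cR(B)$ (since $BX'+XB'$ ranges over the same set as $\tilde B\tilde X'+\tilde X\tilde B'$ for any $\tilde B$ with $\cR(\tilde B)=\cR(B)$), so the adapted-basis computation is legitimate. The second (orthogonal-complement) argument sidesteps even this, since it only uses $YB=0\iff PY=Y$, which holds verbatim for the pseudoinverse-based projector.
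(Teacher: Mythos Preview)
Your proposal is correct. Your second (orthogonal-complement) argument is essentially the paper's own proof: the paper also shows the easy inclusion $\cR(\f_B)\subseteq\cN(\g_B)$, then takes $M\in\cR(\f_B)^\perp$, uses $\tr(M(BX'+XB'))=0$ for all $X$ to deduce $MB=0$, whence $M=\Pi_{\cR(B)^\perp}M\Pi_{\cR(B)^\perp}\in\cR(\g_B)=\cN(\g_B)^\perp$ by self-adjointness. The only cosmetic difference is that you phrase the chain as a string of equivalences and conclude $\cR(\f_B)^\perp=\cR(\g_B)$ directly, while the paper states only the inclusion $\cR(\f_B)^\perp\subseteq\cR(\g_B)$ and lets the easy inclusion close the argument.

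Your first route, the dimension count in an adapted basis, is a genuinely different (and also valid) argument. It buys a concrete description of both subspaces---$\cR(\g_B)$ as the symmetric $q\times q$ block and $\cR(\f_B)$ as its complementary block structure---at the cost of a change of basis and some bookkeeping. The orthogonal-complement argument is shorter and, as you note, handles the rank-deficient case without any extra work; it is the one the paper chose.
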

\begin{proof} It is immediate that
\[
\cR(\f_B)\subseteq\cN(\g_B).
\]
To show equality it suffices to show that $\left(\range(\f_B)\right)^\perp\subseteq \cN(\g_B)^\perp$. To this end, consider
\[
M\in\cS_n\cap \left(\range(\f_B)\right)^\perp.
\]
Then
\[
\trace\left(M(BX+X'B')\right)=0
\]
for all $X\in\mR^{m\times n}$. Equivalently, for $Z=MB\in\mR^{n\times m}$,
$\trace(ZX)+\trace(X'Z')=0$ for all $X$. Thus,
$\trace(ZX)=0$ for all $X$ and hence $Z=0$. Since $MB=Z=0$, then $M\Pi_{\range(B)}=0$ or, equivalently, $M\Pi_{\range(B)^\perp}=M$.
Therefore $\Pi_{\range(B)^\perp}M\Pi_{\range(B)^\perp}=M$, i.e.,
 $M\in \left(\range(\g_B)\right)$. Therefore,
\[\left(\range(\f_B)\right)^\perp\subseteq \left(\range(\g_B)\right)=\cN(\g_B)^\perp
\]
since $\g_B$ is self-adjoint, which completes the proof.
\end{proof}
\vspace*{.2in}

\spacingset{.97}
\bibliographystyle{IEEEtran}
\bibliography{refs}

% Generated by IEEEtran.bst, version: 1.12 (2007/01/11)
\begin{thebibliography}{10}
\providecommand{\url}[1]{#1}
\csname url@samestyle\endcsname
\providecommand{\newblock}{\relax}
\providecommand{\bibinfo}[2]{#2}
\providecommand{\BIBentrySTDinterwordspacing}{\spaceskip=0pt\relax}
\providecommand{\BIBentryALTinterwordstretchfactor}{4}
\providecommand{\BIBentryALTinterwordspacing}{\spaceskip=\fontdimen2\font plus
\BIBentryALTinterwordstretchfactor\fontdimen3\font minus
  \fontdimen4\font\relax}
\providecommand{\BIBforeignlanguage}[2]{{%
\expandafter\ifx\csname l@#1\endcsname\relax
\typeout{** WARNING: IEEEtran.bst: No hyphenation pattern has been}%
\typeout{** loaded for the language `#1'. Using the pattern for}%
\typeout{** the default language instead.}%
\else
\language=\csname l@#1\endcsname
\fi
#2}}
\providecommand{\BIBdecl}{\relax}
\BIBdecl

\bibitem{toyabe2010nonequilibrium}
S.~Toyabe, T.~Okamoto, T.~Watanabe-Nakayama, H.~Taketani, S.~Kudo, and
  E.~Muneyuki, ``Nonequilibrium energetics of a single f1-atpase molecule,''
  \emph{Physical {R}eview {L}etters}, vol. 104, no.~19, p. 198103, 2010.

\bibitem{gannepalli2005thermally}
A.~Gannepalli, A.~Sebastian, J.~Cleveland, and M.~Salapaka, ``Thermally driven
  non-contact atomic force microscopy,'' \emph{Applied Physics Letters},
  vol.~87, no.~11, p. 111901, 2005.

\bibitem{braiman2003control}
Y.~Braiman, J.~Barhen, and V.~Protopopescu, ``Control of friction at the
  nanoscale,'' \emph{Physical {R}eview {L}etters}, vol.~90, no.~9, p. 094301,
  2003.

\bibitem{hayes2001active}
M.~A. Hayes, N.~A. Polson, and A.~A. Garcia, ``Active control of dynamic
  supraparticle structures in microchannels,'' \emph{Langmuir}, vol.~17, no.~9,
  pp. 2866--2871, 2001.

\bibitem{rowan2000gravitational}
S.~Rowan and J.~Hough, ``Gravitational wave detection by interferometry (ground
  and space),'' \emph{Living Rev. Relativity}, vol.~3, no.~3, 2000.

\bibitem{ricci2014low}
F.~Ricci, ``Low temperature and gravitation wave detectors,'' in \emph{Advanced
  Interferometers and the Search for Gravitational Waves}.\hskip 1em plus 0.5em
  minus 0.4em\relax Springer, 2014, pp. 363--387.

\bibitem{petroni2000stochastic}
N.~C. Petroni, S.~De~Martino, S.~De~Siena, and F.~Illuminati, ``Stochastic
  collective dynamics of charged-particle beams in the stability regime,''
  \emph{Physical {R}eview {E}}, vol.~63, no.~1, p. 016501, 2000.

\bibitem{schrodinger1931umkehrung}
E.~Schr{\"o}dinger, ``{\"U}ber die {U}mkehrung der {N}aturgesetze,''
  \emph{{S}itzungsberichte der {P}reuss {A}kad. {W}issen. {P}hys. {M}ath.
  {K}lasse, {S}onderausgabe}, vol.~IX, pp. 144--153, 1931.

\bibitem{schrodinger1932theorie}
------, ``Sur la th{\'e}orie relativiste de l'{\'e}lectron et
  l'interpr{\'e}tation de la m{\'e}canique quantique,'' in \emph{Annales de
  l'institut Henri Poincar{\'e}}, vol.~2, no.~4.\hskip 1em plus 0.5em minus
  0.4em\relax Presses {U}niversitaires de {F}rance, 1932, pp. 269--310.

\bibitem{dai1991stochastic}
P.~Dai~Pra, ``A stochastic control approach to reciprocal diffusion
  processes,'' \emph{Applied mathematics and Optimization}, vol.~23, no.~1, pp.
  313--329, 1991.

\bibitem{beghi1997continuous}
A.~Beghi, ``Continuous-time gauss-markov processes with fixed reciprocal
  dynamics,'' \emph{Journal of Mathematical Systems Estimation and Control},
  vol.~7, pp. 343--366, 1997.

\bibitem{vladimirov2012minimum}
I.~G. Vladimirov and I.~R. Petersen, ``Minimum relative entropy state
  transitions in linear stochastic systems: the continuous time case,'' in
  \emph{Proceedings of 19th International Symposium on Mathematical Theory of
  Networks and Systems}, 2010, pp. 51--58.

\bibitem{chen2014optimal}
Y.~Chen, T.~Georgiou, and M.~Pavon, ``Optimal steering of a linear stochastic
  system to a final probability distribution,'' \emph{arXiv:1408.2222}, 2014.

\bibitem{W}
A.~Wakolbinger, ``Schr{\"o}dinger bridges from 1931 to 1991,'' in \emph{Proc.
  of the 4th Latin American Congress in Probability and Mathematical
  Statistics, Mexico City}, 1990, pp. 61--79.

\bibitem{fortet}
R.~Fortet, ``R\'esolution d'un syst\`eme d'\'equations de {M}.
  {S}chr\"odinger,'' \emph{Math. Pures. Appl.}, vol.~IX, p. 83Ð105, 1940.

\bibitem{Beurling}
A.~Beurling, ``An automorphism of product measures,'' \emph{The Annals of
  Mathematics}, vol.~72, no.~1, pp. 189--200, 1960.

\bibitem{Jamison}
B.~Jamison, ``Reciprocal processes,'' \emph{Z. Wahrscheinlichkeitstheorie verw.
  Gebiete}, vol.~30, pp. 65--86, 1974.

\bibitem{GP}
T.~T. Georgiou and M.~Pavon, ``Positive contraction mappings for classical and
  quantum {S}chr{\"o}dinger systems,'' \emph{arXiv:1405.6650}, 2014.

\bibitem{willems1971least}
J.~C. Willems, ``Least squares stationary optimal control and the algebraic
  riccati equation,'' \emph{Automatic Control, IEEE Transactions on}, vol.~16,
  no.~6, pp. 621--634, 1971.

\bibitem{hautus1977simple}
M.~L. Hautus, ``A simple proof of {H}eymann's lemma,'' \emph{Automatic
  {C}ontrol, IEEE {T}rans.\ on}, vol.~22, no.~5, pp. 885--886, 1977.

\bibitem{georgiou2002structure}
T.~T. Georgiou, ``The structure of state covariances and its relation to the
  power spectrum of the input,'' \emph{Automatic {C}ontrol, IEEE {T}rans.\ on},
  vol.~47, no.~7, pp. 1056--1066, 2002.

\bibitem{hotz1987covariance}
A.~Hotz and R.~E. Skelton, ``Covariance control theory,'' \emph{International
  Journal of Control}, vol.~46, no.~1, pp. 13--32, 1987.

\bibitem{georgiou2002spectral}
T.~T. Georgiou, ``Spectral analysis based on the state covariance: the maximum
  entropy spectrum and linear fractional parametrization,'' \emph{Automatic
  {C}ontrol, IEEE {T}rans.\ on}, vol.~47, no.~11, pp. 1811--1823, 2002.

\end{thebibliography}
\end{document}